\newtheorem{proposition}{Proposition}
\newtheorem{theorem}{Theorem}
\newtheorem{lemma}{Lemma}
\newtheorem{corollary}{Corollary}
\theoremstyle{definition}
\newtheorem{remark}{Remark}
\newtheorem{example}{Example}
\newtheorem{definition}{Definition}
\newcommand{\eps}{\varepsilon}
\newcommand{\R}{\mathbb R} %real
\newcommand{\C}{\mathbb C} %complex
\newcommand{\N}{\mathbb N} %natural
\newcommand{\h}{\mathcal{H}}
\newcommand{\hi}{\mathcal{H}} %Hilbert space
\newcommand{\lh}{\mathcal{L(H)}} %bounded linear operators
\newcommand{\sh}{\mathcal{S(H)}} %states
\newcommand{\no}[1]{\left\|#1\right\|} %norm
\newcommand{\tr}[1]{\mathrm{tr}\left[#1\right]} %trace
\newcommand{\LUB}[1]{{\rm LUB}\, \left\{ #1 \right\}} %lower upper bound
\newcommand{\nil}{O} %zero operator
\newcommand{\Eo}{\mathsf{E}} %generic observable
\newcommand{\Fo}{\mathsf{F}} %generic observable
\newcommand{\Go}{\mathsf{G}} %joint observable
\newcommand{\Po}{\mathsf{P}} %generic PVM
\newcommand{\PP}{\mathsf{P}} %generic PVM
\newcommand{\bor}[1]{\mathcal{B}(#1)} %borel sigma-algebra
\newcommand{\linf}[1]{L^\infty \left( #1 \right)}
\newcommand{\supp}{{\rm supp}\,}
\newcommand{\de}{\, d}
\newcommand{\frecc}{\rightarrow}
\newcommand{\id}{{\rm id}} %identity mapping
\begin{document}

\title[]{Commutative POVMs and Fuzzy Observables}

\author[Ali]{S. Twareque Ali}
\address{S. Twareque Ali, Department of Mathematics and Statistics,
Concordia University, Montreal, Quebec, Canada H3G 1M8}
\email{stali@mathstat.concordia.ca}

\author[Carmeli]{Claudio Carmeli}
\address{Claudio Carmeli, Dipartimento di Fisica, Universit\`a di Genova
and I.N.F.N., Sezione di Genova, Via Dodecaneso 33, 16146 Genova, Italy}
\email{carmeli@ge.infn.it}

\author[Heinosaari]{Teiko Heinosaari}
\address{Teiko Heinosaari, Niels Bohr Institute, Copenhagen University,
Denmark and Department of Physics, University of Turku, Finland}
\email{heinosaari@gmail.com}

\author[Toigo]{Alessandro Toigo}
\address{Alessandro Toigo, Dipartimento di Informatica, Universit\`a di Genova,
Via Dodecaneso 35  and I.N.F.N., Sezione di Genova, Via Dodecaneso 33,
16146 Genova, Italy}
\email{toigo@ge.infn.it}

\begin{abstract}
In this paper we review some properties of fuzzy observables, mainly as
realized by commutative positive operator valued measures. In this context
we discuss two representation theorems for commutative positive operator
valued measures in terms of projection valued measures and describe, in some
detail, the general notion of fuzzification. We also make some related observations
on joint measurements.
\end{abstract}

\maketitle

%%%%%%%%%%%%%%%%%%%%%%
\section{Introduction}
%%%%%%%%%%%%%%%%%%%%%%

Fuzzy observables arise in several ways in quantum formalism.
Their history goes back to the 1970s, and they were one of the first concrete examples
of genuine positive operator valued measures (POVMs) in quantum mechanics.
For commutative POVMs, this concept gives useful representation theorems in terms
of projection valued measures (PVMs). On the other hand, fuzzification is a natural
way to model reduction in the statistical description of the system.  Fuzzy observables
are also intimately related to the topic of joint measurements.

In this article we discuss all these aspects of fuzzy observables. In Section \ref{sec:history}
we give a brief historical overview of fuzzy observables. Section \ref{sec:povm} contains
some basic definitions and fixes the notation. In Section \ref{sec:commutative} we present
two representation theorems for commutative POVMs. In Section \ref{sec:fuzzy} we
explain the connection of these results to the concept of fuzzy observables.  Finally, in
Section \ref{sec:joint} we present some remarks related to joint measurements.
\blfootnote{\textsc{This article is dedicated to Pekka Lahti on the occasion of his 60th birthday.}}

%%%%%%%%%%%%%%%%%%%%%%%%%%%%%%%%%%%%%%%%%%%%%%%%%%%%%%%%
\section{A quick historical overview}\label{sec:history}
%%%%%%%%%%%%%%%%%%%%%%%%%%%%%%%%%%%%%%%%%%%%%%%%%%%%%%%%

The appearance of positive operator valued measures in studying of the foundations
of quantum mechanics followed along several lines. The axiomatization of quantum
mechanics, built upon an analysis of the measurement process,  as undertaken by
G\"unther Ludwig and the Marburg school in the 1960's and 70's
 \cite{Ludwig64,Ludwig70}, led to the notion of generalized observables, which
 went beyond the orthodox concept of an observable in quantum mechanics. An observable
in this more general setting turned out to be a positive operator valued measure on a
Hilbert space. The $\sigma$-algebra of this measure was generally built on the set of
possible values of the observable or observables, the implication being that it was
theoretically possible to discuss joint measurements of incompatible observables.
Thus, in orthodox quantum mechanics the observables of position and momentum are
each characterized by projection valued measures, i.e., the spectral projectors of the
corresponding self adjoint operators, defined on the space of possible observed
values of position and momentum,
and they are not simultaneously measurable. In the more general setting, it is entirely
possible to discuss the joint measurement of these two observables, using a positive
operator valued measure defined on the joint set of values of classical position and
momentum. Of course, in such a joint measurement neither position nor momentum can
assume sharp values, implying  necessarily a certain {\em fuzzification\/.} This line of
thought leads to a one type of formulation of Heisenberg's uncertainty principle -  a
condition ensuring that mutually exclusive measurements of position and momentum
can be reconciled if an appropriate trade-off is accepted. For a recent exposition of
this topic, we refer to \cite{BuHeLa07}.

The second line of development, which led to the use of POVMs was the generalization
of the notion of a conditional expectation in probability theory. An algebraic formulation of the
notion of conditional expectation, in the context of classical, i.e., commutative, probability
theory was worked out in the 1950's \cite{Moy54, NaTu54}, which was further developed
and extended to non-commutative probability theories, in which the classical
algebra of stochastic variables was replaced by a non-abelian von
Neumann algebra. Such a generalization, apart from its intrinsic mathematical
interest,  was then shown to be  useful in discussions of the measurement process in
quantum mechanics \cite{NaUm62,Umegaki54}.  Later,  Davies and Lewis, in
developing an operational basis for quantum measurement theory and a related
theory of quantum stochastic processes,  \cite{DaLe70,Davies70, QTOS76} gave
a definition of a conditional expectation, as the dual to the concept of a measurement.
This had the virtue of extending the von Neumann collapse postulate to observables
with continuous spectra. Each measurement process then singled out an observable,
defined as a POVM.

There is at least one more significant and independent 
line of development involving
POVMs,  which  started in the early 1970s, and most notably expounded and 
developed in the works of 
Holevo and Helstrom  \cite{He76,Ho82}.  Here the notion of a POVM appeared in the
then emerging field of quantum information and �statistical decision theory,
essentially independently of the ideas of Ludwig on quantum foundations and
Davies and Lewis on repeated measurements. This pioneering work led to the
recognition of the fact that the statistics of  optimal quantum
measurements is often obtained by using what were called {\em non-orthogonal POVMs}.
Simultaneously, the  relevance of  Naimark's extension theorem (from POVMs to projection 
valued measures),  was first pointed out in this context.

The idea that POVMs could in some sense describe unsharp measurements was already
present in the work of Davies and Lewis. These ideas were further developed to a
more complete theory of fuzzy observables in \cite{AlEm74, SQMQS84}. In the
succeeding years the connection between unsharp or fuzzy observables and POVMs
has been worked out in great detail by a number of authors. For a systematic
presentation of various aspects of POVMs, we refer to \cite{OQP97}.

%%%%%%%%%%%%%%%%%%%%%%%%%%%%%%%%%%%%
\section{Positive operator valued measures}\label{sec:povm}
%%%%%%%%%%%%%%%%%%%%%%%%%%%%%%%%%%%

Quantum observables are mathematically described by positive operator valued measures.
In this section we shortly recall this concept and fix the notation.

Let $\Omega$ be a Hausdorff locally compact second countable topological
space (lcsc space, for short). We use the following notations:
\begin{itemize}
\item $\bor{\Omega}$ is the Borel $\sigma$-algebra of $\Omega$
\item $C_0 (\Omega)$ is the Banach space of countinuous complex functions on
$\Omega$ vanishing at infinity, endowed with the
uniform norm
\item $M(\Omega)$ is the Banach space of complex measures on $\Omega$ endowed with the total variation norm
\item $M(\Omega)^+$ is the subset of positive elements in $M(\Omega)$
\item $P(\Omega)\subset M(\Omega)^+$ is the subset of probability measures
\end{itemize}

We will regard an element $\mu\in M(\Omega)$ both as a $\sigma$-additive mapping
$\mu : \bor{\Omega} \frecc \C$ and as a bounded linear functional
$\mu : C_0 (\Omega) \frecc \C$. In the latter case, we have the isometric isomorphism $M(\Omega) = C_0 (\Omega)^\ast$. In general, if $\linf{\Omega}$ denotes the Banach space of bounded measurable complex functions on $\Omega$ endowed with the uniform norm, an element $\mu\in M(\Omega)$ determines a bounded linear
functional $\mu : \linf{\Omega} \frecc \C$ by means of the formula
\begin{equation*}
\mu (\phi) = \int_\Omega \phi ( x) \de \mu ( x) \quad \forall \phi \in \linf{\Omega} .
\end{equation*}
We let $B(\Omega)$ be the unit ball in $M(\Omega)$, and $B(\Omega)$ is endowed with the weak*-topology. Since $C_0 (\Omega)$ is separable, $B(\Omega)$ is a compact metrizable space. Moreover, $B(\Omega)^+ := M(\Omega)^+ \cap B(\Omega)$ is closed in $B(\Omega)$, and  $P(\Omega)$ is closed in $B(\Omega)^+$ if and only if $\Omega$ is compact.

Let $\hi$ be a complex separable Hilbert space. We denote by $\lh$ the Banach space
of bounded operators on $\hi$ with the uniform norm.  The convex set of positive trace one elements in $\lh$ is denoted by $\sh$.

A {\em positive operator valued measure} (POVM) $\Eo$ on $\Omega$ with values
in $\hi$ can be defined in two equivalent ways:
\begin{itemize}
\item[\rm (a)] as a mapping $\Eo : \bor{\Omega} \frecc \lh$ such that
\begin{itemize}
\item[\rm (i)] $\Eo(X) \geq \nil$ for all $X\in\bor{\Omega}$
\item[\rm (ii)] $\Eo(\Omega) = I$
\item[\rm (iii)] for all disjoint finite or denumerable sequences $\{ X_i \}_{i\in I}$ of
sets in $\bor{\Omega}$,
\begin{equation*}
\Eo \left( \cup_{i\in I} X_i \right) = \sum\nolimits_{i\in I} \Eo(X_i) ,
\end{equation*}
where the sum converges in the weak (or, equivalently, ultraweak or strong) operator topology;
\end{itemize}
\item[\rm (b)] as a linear mapping $\Eo : C_0 (\Omega) \frecc \lh$ such that
\begin{itemize}
\item[\rm (i)] $\Eo (f) \geq \nil$ if $f\geq 0$
\item[\rm (ii)] If $\{ f_n \}_{n\in \N}$ is a sequence of positive functions in
$C_0 (\Omega)$ such that $f_n \uparrow 1$, then $\Eo (f_n) \uparrow I$.
\end{itemize}
\end{itemize}

The link between the two definitions is established by the following integral formula
\begin{equation}\label{link}
\Eo (f) = \int_\Omega f( x) \de \Eo ( x) \, ,
\end{equation}
true for all $f\in C_0 (\Omega)$. Here in the left hand side $\Eo$ is defined according to {\rm (b)}, and in the right
hand side definition  {\rm (a)} is used. The integral in eq.~(\ref{link}) has to be understood in the following way. If $\Eo$ is a POVM in the sense of definition {\rm (a)}, then for all
$T\in\sh$ we can define a probability measure $p^{\Eo}_T \in P(\Omega)$, given by
\begin{equation}\label{eq:prob}
p^{\Eo}_T (X) = \tr{T\Eo (X)} \quad \forall X\in\bor{\Omega} \, .
\end{equation}
Eq.~(\ref{link}) then reads
\begin{equation*}
\tr{T \Eo (f)} = \int_\Omega f( x) \de p^{\Eo}_T ( x) \quad \forall T\in\sh .
\end{equation*}
Note that if $\phi\in\linf{\Omega}$, we can define in the same way the bounded operator
\begin{equation}\label{link2}
\Eo (\phi) = \int_\Omega \phi ( x) \de \Eo ( x) \, .
\end{equation}
We see from eq.~(\ref{link2}) that $\no{\Eo(\phi)} \leq \no{\phi}$ for every $\phi \in \linf{\Omega}$. In particular, $\Eo : \linf{\Omega} \frecc \lh$ is continuous. We refer to \cite{NST66} for a more detailed discussion of these facts.

In quantum formalism, the set $\sh$ represents the \emph{state space} of a quantum system. Quantum observables are, on the other hand, described by positive operator valued measures. The basic numerical predictions of quantum mechanics - the measurement outcome probability distributions - are given by the trace formula \eqref{eq:prob}. Namely, the number $p^{\Eo}_T (X)$ is interpreted as the probability that a measurement outcome $ x\in X$ occurs when a measurement of the observable $\Eo$ is performed and the system is in the state $T$.

\begin{remark}
If $\Omega$ is compact, then $C_0 (\Omega) = C (\Omega)$, the space of
continuous functions on $\Omega$. Item {\rm (ii)} in definition {\rm (b)} is
equivalent to $\Eo (1_{\Omega}) = I$, where $1_{\Omega}$ is the constant function $1_{\Omega}( x)\equiv 1$.
\end{remark}

For a subset $\mathcal{M}\subset\lh$, we denote by $\mathcal{M}^\prime$ the commutant of $\mathcal{M}$ in $\lh$. If $\Eo$ is a POVM, we denote
\begin{equation}\label{def-E'}
\Eo^\prime = \{ \Eo (f) \mid f\in C_0 (\Omega) \}^\prime .
\end{equation}

\begin{proposition}
Let $\Eo$ be a POVM. Then $\Eo^\prime = \{ \Eo (X) \mid X\in \bor{\Omega} \}^\prime$
\end{proposition}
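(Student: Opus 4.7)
\bigskip

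\noindent\textbf{Proof plan.} The proposition asserts the equality of two commutants, so I would prove the two inclusions separately.

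For the inclusion $\{\Eo(X)\mid X\in\bor{\Omega}\}^\prime\subseteq\Eo^\prime$, I would fix $T$ commuting with every $\Eo(X)$ and show $T\Eo(f)=\Eo(f)T$ for $f\in C_0(\Omega)$. Since every bounded measurable function is a uniform limit of simple measurable functions, and simple functions $s=\sum_i c_i\mathbf{1}_{X_i}$ produce $\Eo(s)=\sum_i c_i\Eo(X_i)$ which commute with $T$ by hypothesis, the norm continuity of $\Eo:\linf{\Omega}\to\lh$ (observed just after eq.~(\ref{link2})) lets me pass to the limit: $\Eo(s_n)\to\Eo(f)$ in operator norm, hence $[T,\Eo(f)]=0$. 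This direction is routine.

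The opposite inclusion $\Eo^\prime\subseteq\{\Eo(X)\mid X\in\bor{\Omega}\}^\prime$ is the interesting one and I would handle it by a Dynkin-class argument. Fix $T\in\Eo^\prime$ and let
\begin{equation*}
\mathcal{D}=\{X\in\bor{\Omega}\mid T\Eo(X)=\Eo(X)T\}.
\end{equation*}
I would first check that $\mathcal{D}$ is a Dynkin class: it contains $\Omega$ because $\Eo(\Omega)=I$; it is closed under proper differences since $\Eo(Y\setminus X)=\Eo(Y)-\Eo(X)$ whenever $X\subseteq Y$; and it is closed under countable increasing unions because the defining $\sigma$-additivity in (a)(iii) gives weak-operator convergence $\Eo(\cup_n X_n)=\lim_n \Eo(X_n)$, and weak-operator limits preserve commutation with the fixed operator $T$.

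The crux is then to show that $\mathcal{D}$ contains a $\pi$-system generating $\bor{\Omega}$, for which the natural choice is the family of open sets. Given an open $U\subseteq\Omega$, I would exhibit a sequence $f_n\in C_0(\Omega)$ with $0\le f_n\le 1$ and $f_n\uparrow \mathbf{1}_U$ pointwise; such a sequence exists because $\Omega$ is lcsc, hence metrizable and $\sigma$-compact, so one can combine a Urysohn-type truncation $\min(n\, d(\,\cdot\,,\Omega\setminus U),1)$ with a compactly supported cut-off coming from an exhausting sequence of compacta. For every $\psi\in\hi$ the map $X\mapsto\ip{\psi}{\Eo(X)\psi}$ is a finite positive measure, so monotone convergence yields $\ip{\psi}{\Eo(f_n)\psi}\to\ip{\psi}{\Eo(U)\psi}$; polarization promotes this to weak-operator convergence $\Eo(f_n)\to\Eo(U)$. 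Since each $\Eo(f_n)$ commutes with $T$ by assumption, so does $\Eo(U)$, proving $U\in\mathcal{D}$. Dynkin's $\pi$--$\lambda$ theorem then gives $\mathcal{D}=\bor{\Omega}$, which completes the proof. The main obstacle is just the approximation step $f_n\uparrow\mathbf{1}_U$ with $f_n\in C_0(\Omega)$, which needs the lcsc assumption and the passage from pointwise monotone to weak-operator convergence via the scalar measures $p^{\Eo}_T$.
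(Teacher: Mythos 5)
Your proof is correct, but it takes a genuinely different route from the paper's. The paper argues via unitaries: for a unitary $U$, membership $U\in\Eo^\prime$ means the POVMs $f\mapsto\Eo(f)$ and $f\mapsto U\Eo(f)U^\ast$ coincide on $C_0(\Omega)$, and since a POVM in the sense of definition (b) determines its Borel extension uniquely (via the Riesz--Markov correspondence underlying the equivalence of definitions (a) and (b)), this forces $U\Eo(X)U^\ast=\Eo(X)$ for all $X\in\bor{\Omega}$; the general case then follows by writing any $A\in\lh$ as a linear combination of four unitaries with the same commutant. That argument is short but leans on the uniqueness half of the (a)$\Leftrightarrow$(b) equivalence, which the paper asserts rather than proves. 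Your two-inclusion proof is more self-contained: the easy inclusion via simple-function approximation and the bound $\no{\Eo(\phi)}\leq\no{\phi}$ is routine and correct, and the harder inclusion via the Dynkin class $\mathcal{D}=\{X\mid T\Eo(X)=\Eo(X)T\}$ together with the Urysohn-type approximation $f_n\uparrow 1_U$ for open $U$ is sound --- weak-operator limits do preserve commutation with a fixed bounded operator, and the lcsc hypothesis does deliver the required $f_n\in C_0(\Omega)$ (modulo the trivial edge case $U=\Omega$, where one uses the exhausting cut-offs alone). In effect you are reproving, in the only place it is needed, the regularity fact that the paper's unitary trick invokes wholesale; the cost is length, the benefit is that every step is elementary and nothing is outsourced to the equivalence of the two definitions of a POVM.
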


\begin{proof}
If $U$ is unitary, then $U\in\Eo^\prime$ if and only if the POVMs $f\mapsto \Eo (f)$ and $f\mapsto \Eo_U (f) = U \Eo (f) U^\ast$ are equal. This implies that $\Eo (X) = \Eo_U (X) = U \Eo (X) U^\ast$ for all $X\in\bor{\Omega}$. Therefore, $U\in\Eo^\prime$ if and only if $U\in \{ \Eo (X) \mid X\in \bor{\Omega} \}^\prime$. The claim then follows by decomposing each $A\in\lh$ as a linear combination of four unitaries $\{ U_1 \ldots U_4 \}$ such that $A^\prime = \{ U_1 \ldots U_4 \}^\prime$.
\end{proof}

%%%%%%%%%%%%%%%%%%%%%%%%%%%%%%%%%%
\section{Commutative observables}\label{sec:commutative}
%%%%%%%%%%%%%%%%%%%%%%%%%%%%%%%%%%

To emphasize the underlying physical context, we will use the term observable instead of POVM. This section, however, is concerned with certain mathematical representations of observables. The physical interpretation of these results are then discussed in Section \ref{sec:fuzzy}.

\begin{definition}
An observable $\Eo$ is {\em commutative} if $\Eo (X) \in \Eo^\prime$ for all $X\in\bor{\Omega}$, or, equivalently, if $\Eo (f) \in \Eo^\prime$ for all $f\in C_0 (\Omega)$.
\end{definition}

A special case of a commutative observable is a {\em projection valued measure} (PVM). It is a mapping $\Po : \bor{\Omega}\frecc \lh$ which in definition {\rm (a)} satisfies the
additional property
\begin{equation*}
\Po(X)\Po(Y) = \Po (X\cap Y)
\end{equation*}
for all $X,Y\in\bor{\Omega}$. It is clear that commutativity of $\Po$ follows from this
condition. When using definition (b) this additional condition reads
\begin{equation*}
\Po (f) \Po (g) = \Po (fg)
\end{equation*}
for all $f,g \in C_0 (\Omega)$. Projection valued measures are commonly referred 
as \emph{sharp observables}.

As we show in this section, each commutative observable has at least two 
possible representations in terms of  sharp observables. This topic has been recently 
studied by Jen\v{c}ov\'{a} 
and Pulmannov\'{a} in \cite{JePu07},\cite{JePu09}, where related results have been 
obtained. Our first representation theorem for commutative observables (Theorem \ref{th:second}) 
goes back to Holevo \cite{Ho72}, who proved it in the general case in which $\Omega$ is 
an arbitrary measurable space. It claims that, if $\Eo$ is a commutative observable, then 
there is a {\em unique} sharp observable $\PP$ on the set of measures $B(\Omega)^+$ 
such that a kind of {\em canonical} spectral decomposition of $\Eo$ 
holds (eqs.~\eqref{eq:rep-1} and \eqref{eq:rep-2}). Actually, $\PP$ vanishes 
outside the set of probability measures $P(\Omega)$. As in the usual spectral theorem 
for normal operators, $\Eo$ and $\PP$ have the same commutant in $\lh$. Under our 
assumption that $\Omega$ is a lcsc space, we will work out a different and more 
easy proof of Holevo's result, which simply makes use of elementary facts from 
$C^\ast$-algebra theory. A second representation 
theorem (Theorem \ref{th:third}) which, as explained below, is  in a way  complementary to the first, was 
proved in \cite{Ali82}. We present it here  without proof, describing only its content.

\begin{theorem}\label{th:first}
Let $\Eo : C_0 (\Omega) \frecc \lh$ be a commutative observable. Then there exists a triple
$(\Sigma,\PP,\mu)$ such that
\begin{itemize}
\item[\rm (i)] $\Sigma$ is a compact metrizable space;
\item[\rm (ii)] $\PP$ is a sharp observable (i.e. PVM) on $\Sigma$ with values in $\hi$ and $\supp \PP = \Sigma$;
\item[\rm (iii)] $\mu : \Sigma \frecc B(\Omega)^+$ is a weak*-continuous injective mapping;
\item[\rm (iv)] for all $f\in C_0 (\Omega)$
\begin{equation}\label{eq. princ.}
\Eo(f) = \int_\Sigma [\mu (h)] (f) \de \PP (h).
\end{equation}
\end{itemize}

If $( \tilde{\Sigma},\tilde{\PP},\tilde{\mu} )$ is another triple satisfying
items {\rm (i)-(iv)} above, then there exists a homeomorphism
$\Phi : \Sigma \frecc \tilde{\Sigma}$ such that $\tilde{\PP} = \PP \circ \Phi^{-1}$ and
$\mu = \tilde{\mu} \circ \Phi$.

Moreover, we have the equality $\PP^\prime = \Eo^\prime$.
\end{theorem}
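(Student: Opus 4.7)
The plan is to construct the triple $(\Sigma,\PP,\mu)$ from the Gelfand representation of the commutative $C^*$-algebra generated by the range of $\Eo$. Let $\mathcal{A}\subseteq\lh$ be the unital $C^*$-subalgebra generated by $\{\Eo(f):f\in C_0(\Omega)\}$. Since $\Eo$ is commutative, $\mathcal{A}$ is abelian, and since $C_0(\Omega)$ is separable while $f\mapsto\Eo(f)$ is norm-continuous, $\mathcal{A}$ is separable. Its Gelfand spectrum $\Sigma$ is therefore a compact metrizable space, and the Gelfand transform $\Gamma:\mathcal{A}\to C(\Sigma)$ is a $*$-isomorphism.

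For each $h\in\Sigma$ the composition $h\circ\Eo:C_0(\Omega)\to\C$ is positive with norm at most one, so by Riesz representation there is a unique $\mu(h)\in B(\Omega)^+$ with $[\mu(h)](f)=h(\Eo(f))$. The map $\mu$ is weak*-continuous because for each $f$ the function $h\mapsto[\mu(h)](f)=\Gamma(\Eo(f))(h)$ lies in $C(\Sigma)$, and it is injective because $\{\Eo(f):f\in C_0(\Omega)\}$ generates $\mathcal{A}$. The spectral theorem for abelian unital $C^*$-algebras provides a unique PVM $\PP:\bor{\Sigma}\to\lh$ such that $\Gamma^{-1}(g)=\int_\Sigma g\de\PP$ for every $g\in C(\Sigma)$; then
\begin{equation*}
\Eo(f)=\Gamma^{-1}(\Gamma(\Eo(f)))=\int_\Sigma[\mu(h)](f)\de\PP(h),
\end{equation*}
which is (iv). The condition $\supp\PP=\Sigma$ follows from injectivity of $\Gamma^{-1}$: were $\PP(U)=0$ for some nonempty open $U\subseteq\Sigma$, picking a nonzero $g\in C(\Sigma)$ with $0\le g\le\chi_U$ would yield $\Gamma^{-1}(g)=\int g\de\PP=0$, a contradiction. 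The equality $\Eo'=\PP'$ is a double-commutant computation: by the preceding Proposition $\Eo'=\mathcal{A}'$, while $\PP'$ is the commutant of the von Neumann algebra generated by the range of $\PP$, which coincides with $\mathcal{A}''$, so $\PP'=\mathcal{A}'''=\mathcal{A}'$.

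The main obstacle is uniqueness. The strategy is to show that any triple satisfying (i)--(iv) exhibits its base space as the Gelfand spectrum of $\mathcal{A}$, so the natural homeomorphism is forced. Given $(\tilde\Sigma,\tilde\PP,\tilde\mu)$, the integration map $\tilde\Gamma:C(\tilde\Sigma)\to\lh$, $g\mapsto\int g\de\tilde\PP$, is a $*$-homomorphism; it is injective because $\supp\tilde\PP=\tilde\Sigma$ (by the same argument as above), so its image $\tilde{\mathcal{A}}$ is a commutative $C^*$-subalgebra of $\lh$ and $\tilde\Gamma:C(\tilde\Sigma)\to\tilde{\mathcal{A}}$ is a $*$-isomorphism. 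Formula (iv) for the new triple shows $\Eo(f)=\tilde\Gamma(h\mapsto[\tilde\mu(h)](f))$, so $\mathcal{A}\subseteq\tilde{\mathcal{A}}$. For the reverse inclusion, the family $\{h\mapsto[\tilde\mu(h)](f):f\in C_0(\Omega)\}\cup\{1\}$ generates a unital $*$-subalgebra of $C(\tilde\Sigma)$ that separates points (by injectivity of $\tilde\mu$) and is closed under complex conjugation (since each $\tilde\mu(h)$ is a positive real measure); Stone-Weierstrass then forces it to be uniformly dense in $C(\tilde\Sigma)$, whence $\tilde{\mathcal{A}}=\mathcal{A}$. The composition $\Gamma\circ\tilde\Gamma:C(\tilde\Sigma)\to C(\Sigma)$ is thus a $*$-isomorphism, which by Gelfand duality is induced by a unique homeomorphism $\Phi:\Sigma\to\tilde\Sigma$. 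Evaluating this identity on the generators $\Eo(f)$ gives $\mu=\tilde\mu\circ\Phi$, and the uniqueness clause of the spectral theorem applied to $\mathcal{A}$ then yields $\tilde\PP=\PP\circ\Phi^{-1}$.
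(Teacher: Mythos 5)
Your proposal is correct and follows essentially the same route as the paper: the Gelfand transform of the separable commutative $C^\ast$-algebra generated by the range of $\Eo$ and $I$, the Riesz representation to define $\mu$, the spectral theorem for the resulting $\ast$-representation of $C(\Sigma)$, and Stone--Weierstrass plus Gelfand duality for the uniqueness clause. The only cosmetic difference is that you invoke uniqueness of the spectral measure to get $\tilde{\PP} = \PP \circ \Phi^{-1}$ where the paper verifies it by a direct change-of-variables computation in the integral.
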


\begin{proof}
We denote by $C^\ast (\Eo)$ the commutative $C^\ast$-algebra generated in $\lh$ by the set $\{ \Eo (f) \mid f\in C_0 (\Omega) \}$ and $I$. Since $C_0 (\Omega)$ is separable and the mapping $f\mapsto \Eo (f)$ is continuous, the Banach space $C^\ast (\Eo)$ is separable.

Let $\Sigma$ be the maximal ideal space of $C^\ast (\Eo)$. Then $\Sigma$ is a compact metrizable space. We denote the Gelfand transform of $C^\ast (\Eo)$ by $\Gamma$, i.e.,
\begin{equation*}
\Gamma : C^\ast (\Eo) \frecc C(\Sigma), \qquad \Gamma A (h) =
h(A) \qquad \forall A \in C^\ast (\Eo), \, h\in\Sigma \, .
\end{equation*}
Thus, $\Gamma$ is a $C^\ast$-algebra isomorphism
between $C^\ast (\Eo)$ and $C(\Sigma)$.

For all $h\in\Sigma$, we define a mapping $\mu(h)$ from $C_0 (\Omega)$  to $\C$ by
\begin{equation*}
[\mu(h)] (f) := [\Gamma \Eo (f)] (h) \quad \forall f\in C_0 (\Omega) \, .
\end{equation*}
From the properties of $\Gamma$ and $\Eo$ follows that $\mu(h)$ is a positive bounded linear functional on $C_0 (\Omega)$,
hence $\mu (h) \in M(\Omega)^+$. Moreover,
\begin{equation*}
\left| [\mu(h)] (f) \right| \leq \no{\Gamma}
\no{\Eo (f)} \leq \no{f} \, ,
\end{equation*}
which implies that $\no{\mu(h)} \leq 1$.

We have thus defined a mapping $\mu$ from $\Sigma$ to $M(\Omega)$.
The weak*-continuity of $\mu$ is clear from the definition. Since the subset
$\{ [\mu(\cdot)] (f) \mid f\in C_0 (\Omega) \}$ together with the constant function $1_{\Omega}$ generates the algebra $C(\Sigma)$, it separates the points of $\Sigma$. Therefore, $\mu : \Sigma \frecc M(\Omega)$ is injective.

Since $\Gamma^{-1}$ is a $\ast$-representation of $C(\Sigma)$ in $\lh$, there exists a sharp observable (PVM) $\PP : \bor{\Sigma} \frecc \lh$ such that
\begin{equation*}
\Gamma^{-1} (\phi) = \int_\Sigma \phi (h) \de \PP (h) \quad \forall \phi \in C(\Sigma)
\end{equation*}
(see e.g. Theorem 2.5.5 in \cite{CAOT89}).
Since $\Gamma^{-1}$ is injective, $\supp \PP = \Sigma$.
Moreover,
\begin{equation*}
\Eo(f) = \Gamma^{-1} \Gamma \Eo (f) = \int_\Sigma [\mu (h)] (f) \de \PP (h) \, ,
\end{equation*}
and item {\rm (iv)} follows.

Now suppose $( \tilde{\Sigma},\tilde{\PP},\tilde{\mu} )$ is another triple
satisfying items {\rm (i)-(iv)} in the theorem. Let
\begin{equation*}
\Pi (\tilde{\phi}) = \int_{\tilde{\Sigma}} \tilde{\phi} (\tilde{h}) \de \tilde{\PP} (\tilde{h})
\quad \forall \tilde{\phi} \in C(\tilde{\Sigma}) \, .
\end{equation*}
Then $\Pi$ is a $\ast$-homomorphism of $C(\tilde{\Sigma})$ in $\lh$, which is injective
since $\supp \tilde{\PP} = \tilde{\Sigma}$. Moreover,
\begin{equation*}
\Pi \left( [\tilde{\mu} (\cdot)] (f) \right) = \Eo(f) \quad \forall f\in C_0 (\Omega) .
\end{equation*}
By injectivity of $\tilde{\mu}$, the set of functions $\{ [\tilde{\mu} (\cdot)] (f) \mid f\in C_0 (\Omega) \}$ separates the points of $\tilde{\Sigma}$, and it is closed under complex conjugation since
$\tilde{\mu} (\tilde{h}) \in M(\Omega)^+$ for all $\tilde{h} \in \tilde{\Sigma}$. The functions $\{ [\tilde{\mu} (\cdot)] (f) \mid f\in C_0 (\Omega) \}$ and $1_{\Omega}$ thus generate the algebra $C(\tilde{\Sigma})$ by Stone-Weierstrass theorem. It follows that $\Pi ( C(\tilde{\Sigma}) ) = C^\ast (E)$.

The composition $\Gamma \Pi$ is thus a $\ast$-isomorphism of $C(\tilde{\Sigma})$
into $C(\Sigma)$, hence there exists a homeomorphism $\Phi : \Sigma \frecc \tilde{\Sigma}$
such that
\begin{equation*}
\Gamma \Pi (\tilde{\phi}) = \tilde{\phi} \circ \Phi \quad \forall \tilde{\phi} \in C(\tilde{\Sigma})
\end{equation*}
(see Theorem 2.1 in \S VI of \cite{CFA90}). We thus have, for all $f\in C_0 (\Omega)$,
\begin{equation*}
[\mu(\cdot)] (f) = \Gamma \Pi \left( [\tilde{\mu} (\cdot)] (f) \right) =
[\tilde{\mu} \circ \Phi (\cdot)] (f) \, .
\end{equation*}
Moreover,
\begin{equation*}
\int_{\tilde{\Sigma}} \tilde{\phi} (\tilde{h}) \de \tilde{\PP} (\tilde{h}) = \Pi (\tilde{\phi}) = \Gamma^{-1} \left[ \Gamma \Pi (\tilde{\phi}) \right] = \int_{\Sigma} \tilde{\phi} (\Phi(h)) \de \PP (h) = \int_{\tilde{\Sigma}} \tilde{\phi} (\tilde{h}) \de (\PP \circ \Phi^{-1}) (\tilde{h})
\end{equation*}
for all $\tilde{\phi} \in C(\tilde{\Sigma})$, which implies $\tilde{\PP} = \PP \circ \Phi^{-1}$.

Finally, $S\in\Eo^\prime$ if and only if $S\in C^\ast (\Eo)^\prime$, and, by spectral theory, $C^\ast (\Eo)^\prime = \PP^\prime$, so that the last claim is proved.
\end{proof}

\begin{proposition}\label{prop:compact}
Suppose that $\Omega$ is compact. Then in the situation of Theorem \ref{th:first}, $\mu (h)$ is a probability measure for all $h\in\Sigma$.
\end{proposition}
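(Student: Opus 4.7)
The plan is very short: I want to evaluate $\mu(h)$ at the constant function $1_\Omega$ and use compactness to conclude that this pins down the total mass. Since $\Omega$ is compact we have $C_0(\Omega)=C(\Omega)$, so $1_\Omega\in C_0(\Omega)$ and, by the remark following the definition of a POVM, $\Eo(1_\Omega)=I$. Combined with the fact that the Gelfand transform $\Gamma$ is a unital $\ast$-isomorphism, so $\Gamma I$ is the constant function $1_\Sigma$, this gives
\begin{equation*}
[\mu(h)](1_\Omega) = [\Gamma\Eo(1_\Omega)](h) = [\Gamma I](h) = 1
\end{equation*}
for every $h\in\Sigma$.

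Next I would translate this back into the measure-theoretic language via the integral representation: for $\mu(h)\in M(\Omega)^+$ regarded as a Borel measure, $[\mu(h)](1_\Omega)=\int_\Omega 1\,d\mu(h) = \mu(h)(\Omega)$. Therefore $\mu(h)(\Omega)=1$. Since $\mu(h)$ is already known from the proof of Theorem~\ref{th:first} to belong to $M(\Omega)^+$, it is a positive Borel measure of total mass $1$, i.e.\ an element of $P(\Omega)$, which is exactly the claim.

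There is no real obstacle here: the whole argument is a one-line use of unitality ($\Eo(1_\Omega)=I$ and $\Gamma(I)=1_\Sigma$), and compactness of $\Omega$ is used only to ensure that $1_\Omega$ itself lies in $C_0(\Omega)$ so that the defining formula $[\mu(h)](f)=[\Gamma\Eo(f)](h)$ can be applied to it. (If $\Omega$ were not compact, one would instead approximate $1_\Omega$ by an increasing sequence $f_n\uparrow 1$ in $C_0(\Omega)$; then $\Eo(f_n)\uparrow I$ only in the strong sense, and $[\mu(h)](f_n)=[\Gamma\Eo(f_n)](h)$ need not tend to $1$, which is exactly why the conclusion can fail in the non-compact case and $\mu(h)$ may only be sub-probabilistic.)
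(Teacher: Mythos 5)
Your proof is correct and is essentially identical to the paper's own argument: both evaluate $[\mu(h)](1_\Omega)=[\Gamma\Eo(1_\Omega)](h)=[\Gamma I](h)=h(I)=1$ using unitality of the Gelfand transform and the fact that $1_\Omega\in C_0(\Omega)$ when $\Omega$ is compact. Your closing remark about the non-compact case is also consistent with the paper, where Proposition~\ref{Prop. supp. di P} only yields that $\mu(h)$ is a probability measure for $\PP$-almost all $h$.
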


\begin{proof}
For each $h\in\Sigma$, we get
\begin{equation*}
[\mu(h)](1_{\Omega}) = [\Gamma \Eo(1_{\Omega})](h) = \Gamma I(h)=h(I)=1 \, .
\end{equation*}
\end{proof}

Suppose $\Sigma$ is a lcsc space. We say that a mapping $\mu : \Sigma \frecc M(\Omega)^+$ is {\em weak*-measurable} if the mappings
\begin{equation*}
\Sigma \ni h \to [\mu (h)] (f) \in \C
\end{equation*}
are measurable for all $f\in C_0 (\Omega)$. The following auxiliary result is needed later.

\begin{proposition}\label{misMark}
Let $\Sigma$ be a lcsc space and $\mu$ a mapping from $\Sigma$ to $M(\Omega)^+$. Then the following are equivalent:
\begin{itemize}
\item[(a)] $\mu$ is  weak*-measurable;
\item[(b)] the mappings $\Sigma \ni h \to [\mu (h)] (X) \in \R$ are measurable for all $X\in \bor{\Omega}$.
\end{itemize}
If $\mu(\Sigma) \subset B(\Omega)^+$, then the above two conditions are also equivalent to
\begin{itemize}
\item[(c)] $\mu : \Sigma \frecc B(\Omega)^+$ is a measurable map.
\end{itemize}
\end{proposition}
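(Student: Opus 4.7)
The plan is to prove the cycle \textbf{(b) $\Rightarrow$ (a) $\Rightarrow$ (b)} and the equivalence \textbf{(a) $\Leftrightarrow$ (c)} under the extra hypothesis $\mu(\Sigma)\subset B(\Omega)^+$.

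For \emph{(b) $\Rightarrow$ (a)} I would fix $f\in C_0(\Omega)$ and approximate it by a uniformly bounded sequence of Borel simple functions $s_n = \sum_k a_{n,k}\, 1_{X_{n,k}}$ converging pointwise to $f$. For each $h$, since $\mu(h)\in M(\Omega)^+$ has finite total mass, dominated convergence gives $[\mu(h)](s_n)\to[\mu(h)](f)$. But $[\mu(h)](s_n)=\sum_k a_{n,k}\,\mu(h)(X_{n,k})$ is measurable in $h$ by hypothesis (b), hence so is the pointwise limit $h\mapsto[\mu(h)](f)$.

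The reverse implication \emph{(a) $\Rightarrow$ (b)} is the main step and uses a Dynkin-class argument. First, for any open $U\subset\Omega$ I would exploit the lcsc hypothesis: $U$ is itself lcsc, hence $\sigma$-compact, so by Urysohn's lemma I can build a sequence $\{f_n\}\subset C_0(\Omega)$ with $0\le f_n\le 1$, $\supp f_n\subset U$, and $f_n\uparrow 1_U$ pointwise. Then monotone convergence yields
\begin{equation*}
\mu(h)(U) = \lim_n [\mu(h)](f_n),
\end{equation*}
which is measurable in $h$ by (a). Next I would let $\mathcal{L}:=\{X\in\bor{\Omega}:h\mapsto\mu(h)(X)\text{ is measurable}\}$ and verify that $\mathcal{L}$ is a Dynkin system: $\Omega\in\mathcal{L}$ (apply the preceding argument with $U=\Omega$), closure under proper set difference follows from finiteness of each $\mu(h)$ (so $\mu(h)(Y\setminus X)=\mu(h)(Y)-\mu(h)(X)$), and closure under countable monotone increasing unions follows from continuity from below of each $\mu(h)$. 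Since the open sets form a $\pi$-system generating $\bor{\Omega}$, Dynkin's theorem gives $\mathcal{L}=\bor{\Omega}$.

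For \emph{(a) $\Leftrightarrow$ (c)} under $\mu(\Sigma)\subset B(\Omega)^+$, I would use that $B(\Omega)$ is weak*-compact metrizable (because $C_0(\Omega)$ is separable, as recalled in Section~\ref{sec:povm}), and $B(\Omega)^+$ is a closed subspace. The evaluation maps $\nu\mapsto\nu(f)$ for $f\in C_0(\Omega)$ are weak*-continuous and separate the points of $B(\Omega)^+$; by Stone--Weierstrass, together with constants they generate a dense $\ast$-subalgebra of $C(B(\Omega)^+)$. Hence the Borel $\sigma$-algebra of $B(\Omega)^+$ coincides with the $\sigma$-algebra generated by a countable family of such evaluations (one for each element of a countable dense subset of $C_0(\Omega)$). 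This makes the implication (c) $\Rightarrow$ (a) immediate, and the converse follows because measurability of $h\mapsto[\mu(h)](f)$ for every $f\in C_0(\Omega)$ in particular handles the generating countable family.

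The main obstacle is the open-set case in \emph{(a) $\Rightarrow$ (b)}: one must genuinely use that $\Omega$ is lcsc (not just that it is a general topological space) to produce a \emph{countable} sequence $f_n\uparrow 1_U$ in $C_0(\Omega)$, for otherwise a sup over an uncountable family need not preserve measurability.
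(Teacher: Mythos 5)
Your proof is correct, but for the main implication (a) $\Rightarrow$ (b) it takes a genuinely different route from the paper. The paper works at the level of functions: it introduces the class $L^\infty_\mu$ of bounded Borel functions $\phi$ for which $h \mapsto [\mu(h)](\phi)$ is measurable, observes that it is a monotone class containing $C_0(\Omega)$ (closure under bounded monotone limits via dominated convergence), and invokes a functional monotone class theorem (Proposition 6.2.9 in \cite{AN89}) to conclude $L^\infty_\mu = \linf{\Omega}$, whence measurability for all indicators $1_X$. You instead work at the level of sets: you first handle open $U$ by producing, via $\sigma$-compactness of $U$ and Urysohn's lemma, an increasing sequence in $C_0(\Omega)$ converging to $1_U$ (correctly flagging that second countability is what makes the approximating family countable), and then run a $\pi$-$\lambda$ argument on $\mathcal{L}=\{X : h\mapsto [\mu(h)](X) \text{ measurable}\}$, using finiteness of each $\mu(h)$ for closure under proper differences. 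Your version is more self-contained but needs the extra open-set step; the paper's is shorter but leans on the cited lemma. For (b) $\Rightarrow$ (a) both proofs approximate $f$ by simple functions (the paper uses a uniform $1/n$-oscillation partition, you use pointwise convergence plus dominated convergence --- equivalent in substance), and for (a) $\Leftrightarrow$ (c) your Stone--Weierstrass argument identifying $\bor{B(\Omega)^+}$ with the $\sigma$-algebra generated by countably many evaluation maps is essentially the paper's countable-basis argument in different packaging. No gaps.
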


\begin{proof}
Suppose (b) holds. Fix $f\in C_0 (\Omega)$. For $n \in \N$, let $\{ E_i \}_{i\in I}$ be a finite partition of $\Omega$ into disjoint Borel sets such that $|f (x) - f (y)| \leq 1/n$ for all $ x, y \in E_i$ and $i\in I$. Choose $x_i \in E_i$, and let $\phi_n = \sum_i f (x_i) 1_{E_i}$. Then the mapping $h \mapsto [\mu (h)] (\phi_n)$ is measurable, and
\begin{equation*}
\left| [\mu (h)] (f) - [\mu (h)] (\phi_n) \right| \leq \no{\mu(h)}/n \, .
\end{equation*}
Hence the mapping $h \mapsto [\mu (h)] (f)$ is measurable being the pointwise limit of measurable functions.

Suppose then that (a) holds, i.e., $\mu$ is weak*-measurable. Let
\begin{equation*}
L^\infty_\mu = \left\{ \phi\in\linf{\Omega} \mid h \mapsto [\mu (h)] (\phi) \textrm{ is a measurable mapping} \right\} .
\end{equation*}
If $\{ \phi_n \}_{n\in\N} \subset L^\infty_\mu$, $\phi\in\linf{\Omega}$ are such that $\phi_n \uparrow \phi$, then $[\mu (h)] (\phi_n) \uparrow [\mu (h)] (\phi)$ for all $h$ by dominated convergence theorem, hence $h \mapsto [\mu (h)] (\phi)$ is measurable. This shows that $L^\infty_\mu$ is a monotone class in $\linf{\Omega}$. Since $C_0 (\Omega) \subset L^\infty_\mu$, $L^\infty_\mu = \linf{\Omega}$ by Proposition 6.2.9 in \cite{AN89}. In particular, $h \mapsto [\mu (h)] (X) = [\mu (h)] (1_X)$ is measurable for all $X\in\bor{\Omega}$.

Now assume that $\mu(\Sigma) \subset B(\Omega)^+$. Since the mapping $B(\Omega)^+ \ni \nu \mapsto \nu (f) \in \C$ is continuous, it is clear that (c) implies (a). On the other hand, suppose that (a) holds. Since $B(\Omega)^+$ is second countable, there exist sequences $\{ \nu_n \}_{n\in\N}$ in $B(\Omega)^+$, $\{ \eps_n \}_{n\in\N}$ in $\R_+$, $\{ k_n \}_{n\in\N}$ in $\N$, and, for all $n\in\N$, $\{ f^n_i \}_{i=1}^{k_n}$ in $C_0 (\Omega)$, such that the open sets
\begin{equation*}
U_n = \{ \nu\in B(\Omega)^+ \mid |\nu (f^n_i) - \nu_n (f^n_i) | < \eps_n \textrm{ for all } i = 1,2 \ldots k_n \}
\end{equation*}
form a denumerable basis for the topology of $B(\Omega)^+$. By (a), $\mu^{-1} (U_n) \in \bor{\Sigma}$ for all $n$, hence $\mu^{-1} (U) \in \bor{\Sigma}$ for all open $U\subset B(\Omega)^+$. This implies that $\mu$ is measurable.
\end{proof}

\begin{corollary}\label{Prop. misurab. di P}
The set $P(\Omega)$ of probability measures is a measurable subset in $B(\Omega)^+$.
\end{corollary}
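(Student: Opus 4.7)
The plan is to apply Proposition \ref{misMark} directly to the inclusion map $\mu : B(\Omega)^+ \hookrightarrow M(\Omega)^+$. Taking $\Sigma := B(\Omega)^+$, we note that $\Sigma$ is compact metrizable (as recorded at the start of Section \ref{sec:povm}) and hence in particular a lcsc space, so the hypotheses of Proposition \ref{misMark} are met. For every $f\in C_0 (\Omega)$ the map $\nu \mapsto [\mu (\nu)] (f) = \nu (f)$ is weak*-continuous on $B(\Omega)^+$ by the very definition of the weak*-topology; in particular it is Borel measurable, so $\mu$ is weak*-measurable in the sense of condition {\rm (a)}.

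Invoking the equivalence {\rm (a)} $\Leftrightarrow$ {\rm (b)} of Proposition \ref{misMark}, the functional $\nu \mapsto \nu (X)$ is Borel measurable on $B(\Omega)^+$ for every $X\in\bor{\Omega}$. Specializing to $X = \Omega$ gives the Borel measurability of the total-mass map $\nu \mapsto \nu (\Omega)$, so that
\[
P(\Omega) = \{ \nu \in B(\Omega)^+ \mid \nu (\Omega) = 1 \}
\]
is the preimage of the closed singleton $\{1\}$ under a Borel measurable function, hence a Borel subset of $B(\Omega)^+$.

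A slightly more hands-on variant would bypass Proposition \ref{misMark} altogether: since $\Omega$ is $\sigma$-compact, Urysohn's lemma supplies an increasing sequence $f_n \in C_0 (\Omega)$ with $0 \leq f_n \leq 1$ and $f_n \uparrow 1_\Omega$. Each map $\nu \mapsto \nu (f_n)$ is weak*-continuous on $B(\Omega)^+$, and for every $\nu \in B(\Omega)^+$ monotone convergence gives $\nu (f_n) \uparrow \nu (\Omega)$. Thus $\nu \mapsto \nu (\Omega)$ is a pointwise limit of continuous functions, hence Borel, and the conclusion follows exactly as above. There is no real obstacle here; the only point worth double-checking is that weak*-continuity of the scalar maps $\nu \mapsto \nu (f)$ really does qualify the inclusion for condition {\rm (a)} of Proposition \ref{misMark}, which is immediate.
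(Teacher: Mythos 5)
Your proof is correct and is essentially identical to the paper's: both take $\Sigma = B(\Omega)^+$ with the identity/inclusion map, note it is weak*-measurable because it is weak*-continuous, invoke the implication (a) $\Rightarrow$ (b) of Proposition \ref{misMark} to get Borel measurability of $\nu \mapsto \nu(\Omega)$, and realize $P(\Omega)$ as the preimage of $\{1\}$. The Urysohn-sequence variant you sketch is a fine elementary alternative but is not needed.
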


\begin{proof}
Take in the above proposition $\Sigma = B(\Omega)^+$ and $\mu$ the identity mapping, which is weak*-measurable being clearly weak*-continuous. The set $P(\Omega) = \{ \nu \in B(\Omega)^+ \mid \nu (\Omega) = 1 \}$ is thus measurable, since it is the inverse image of $1$ under the measurable mapping $\mu \mapsto \mu (\Omega)$.
\end{proof}

\begin{proposition}\label{Prop. supp. di P}
Let $\Eo$ and $(\Sigma, \PP, \mu)$ be as in Theorem \ref{th:first}. Then $\mu (h)$ is a probability measure for $\PP$-almost all $h\in\Sigma$.
\end{proposition}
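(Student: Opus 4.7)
My plan is to deduce the result from the normalization $\Eo(1_\Omega) = I$ together with the spectral representation \eqref{eq. princ.}. Setting $g(h) := \mu(h)(\Omega) \in [0,1]$, I will show that $g = 1$ holds $\PP$-almost everywhere by first producing a scalar identity of the form $\int_\Sigma g \, dp^\PP_\psi = 1$ for every unit vector $\psi\in\hi$, and then observing that $\PP$-negligibility of a Borel set $N\subset\Sigma$ is equivalent to $p^\PP_\psi(N) = 0$ for all such $\psi$.

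To carry this out I would first fix a sequence $\{f_n\} \subset C_0(\Omega)$ of positive functions with $f_n \uparrow 1_\Omega$ pointwise; such a sequence exists because $\Omega$, being lcsc, is $\sigma$-compact. Measurability of $g$ follows from Proposition \ref{misMark}, since the weak*-continuous (hence weak*-measurable) map $\mu$ makes $h \mapsto \mu(h)(X)$ Borel measurable for every $X\in\bor{\Omega}$. Then, for every unit vector $\psi \in \hi$, identity \eqref{eq. princ.} applied to $f_n$ gives
\[
\ip{\psi}{\Eo(f_n)\psi} = \int_\Sigma [\mu(h)](f_n) \, dp^\PP_\psi(h),
\]
where $p^\PP_\psi(X) := \ip{\psi}{\PP(X)\psi}$ is a probability measure on $\Sigma$. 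As $n \to \infty$, the left-hand side tends to $1$ by definition (b)(ii) of $\Eo$, while on the right, one application of monotone convergence inside the integrand (for the scalar measure $\mu(h)$) produces $[\mu(h)](f_n) \uparrow g(h)$, and a second application against $p^\PP_\psi$ yields $\int_\Sigma g \, dp^\PP_\psi = 1$. Since $0 \le g \le 1$, this forces $p^\PP_\psi\{h : g(h) < 1\} = 0$.

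Finally, letting $N := \{h\in\Sigma : g(h) < 1\}\in\bor{\Sigma}$, the previous step shows $\ip{\psi}{\PP(N)\psi} = 0$ for every unit $\psi\in\hi$, whence $\PP(N) = \nil$, which is precisely the claim. I do not foresee any real obstacle; the only point requiring mild care is the translation between the operator-valued notion of a $\PP$-null set and the familiar scalar notion, and this is accomplished by scalarising with arbitrary unit vectors as above.
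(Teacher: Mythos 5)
Your proof is correct and follows essentially the same route as the paper: both arguments take a sequence $f_n \uparrow 1_\Omega$, pass to the limit in \eqref{eq. princ.} via monotone convergence to obtain the normalisation $\int_\Sigma [\mu(h)](\Omega)\de\PP(h) = I$, and then conclude that $\{h : [\mu(h)](\Omega) < 1\}$ is $\PP$-null. The only (immaterial) difference is in the last step, where the paper stays at the operator level and multiplies by the projection $\PP(X_\varepsilon)$ with $X_\varepsilon = \{h : [\mu(h)](\Omega) < 1-\varepsilon\}$, whereas you scalarise with unit vectors and use the standard fact that a $[0,1]$-valued function with integral $1$ against a probability measure equals $1$ almost everywhere.
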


\begin{proof}
If $\{ f_n \}_{n\in\N}$ is a sequence of positive functions in $C_0 (\Omega)$ such that $f_n \uparrow 1_{\Omega}$, we have
\begin{equation*}
\int_\Sigma [\mu (h)] (f_n) \de \PP (h) = \Eo(f_n) \uparrow I .
\end{equation*}
On the other hand, $[\mu (h)] (f_n) \uparrow [\mu (h)] (\Omega)$ for all $h\in\Sigma$, so that, by monotone convergence theorem and uniqueness of the limit,
\begin{equation}\label{normalisation}
\int_\Sigma [\mu (h)] (\Omega) \de \PP (h) = I .
\end{equation}
For $\varepsilon > 0$, let $X_\varepsilon = \{ h\in \Sigma \mid [\mu (h)] (\Omega) < 1 - \varepsilon \}$. Multiplying eq.~(\ref{normalisation}) by $\PP (X_\varepsilon)$, we obtain
\begin{equation*}
\PP (X_\varepsilon) = \int_{X_\varepsilon} [\mu (h)] (\Omega) \de \PP (h) \leq ( 1 - \varepsilon ) \PP (X_\varepsilon) ,
\end{equation*}
hence $\PP (X_\varepsilon) = 0$. The claim then follows.
\end{proof}

\begin{proposition}\label{caratt. di E sui Borel}
Let $\Eo$ and $(\Sigma, \PP, \mu)$ be as in Theorem \ref{th:first}. Then
\begin{equation}\label{eq:borel}
\Eo (X) = \int_\Sigma [\mu(h)] (X) \de \PP (h) \quad \forall X\in\bor{\Omega} \, .
\end{equation}
\end{proposition}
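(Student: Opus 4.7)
The strategy is to pass from the $C_0(\Omega)$-identity in Theorem~\ref{th:first}(iv) to the Borel identity (\ref{eq:borel}) by testing both sides against an arbitrary density operator $T\in\sh$ and then applying the uniqueness part of the Riesz representation theorem. First I would check that the right hand side of (\ref{eq:borel}) is a well-defined bounded operator: by Proposition~\ref{misMark}, the function $h \mapsto [\mu(h)](X)$ lies in $\linf{\Sigma}$ for each $X\in\bor{\Omega}$, hence may be integrated against $\PP$ via formula (\ref{link2}).

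Fix $T\in\sh$ and set
\begin{equation*}
\nu_T (X) := \int_\Sigma [\mu (h)] (X) \de p_T^{\PP} (h), \qquad X \in \bor{\Omega} .
\end{equation*}
Positivity of $\nu_T$ is clear, and $\sigma$-additivity follows by monotone convergence applied to $\sum_{i\leq N} [\mu(h)] (X_i) \uparrow [\mu(h)] (\cup_i X_i)$ for any disjoint sequence $\{X_i\}\subset\bor{\Omega}$; hence $\nu_T$ is a finite positive Borel measure on $\Omega$. A routine extension argument (from indicators to simple functions by linearity, then to bounded measurable functions by monotone convergence) yields
\begin{equation*}
\int_\Omega g ( x) \de\nu_T ( x) = \int_\Sigma [\mu (h)] (g) \de p_T^{\PP} (h) \qquad \forall g\in\linf{\Omega} .
\end{equation*}

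Specialising to $g = f\in C_0 (\Omega)$ and combining with Theorem~\ref{th:first}(iv) and (\ref{eq:prob}) applied to both $\Eo$ and $\PP$ gives
\begin{equation*}
\int_\Omega f \de p_T^\Eo = \tr{T\Eo (f)} = \int_\Sigma [\mu(h)] (f) \de p_T^{\PP} ( h) = \int_\Omega f \de \nu_T .
\end{equation*}
Thus the finite positive Borel measures $p_T^\Eo$ and $\nu_T$ agree as bounded linear functionals on $C_0 (\Omega)$, and hence coincide on $\bor{\Omega}$ by the uniqueness part of Riesz representation. Rewriting $\nu_T (X) = \tr{T \int_\Sigma [\mu (h)] (X) \de\PP (h)}$ and letting $T$ range over $\sh$, identity (\ref{eq:borel}) follows at the operator level.

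The one technical point is the Fubini-type identity extending the relation between integration against $\nu_T$ and iterated integration against $\mu(h)$ and $p_T^{\PP}$ from indicators to all of $\linf{\Omega}$. This is routine once the measurability of $h \mapsto [\mu (h)] (g)$ for bounded Borel $g$ is in hand — precisely the content of the first half of Proposition~\ref{misMark} — together with the uniform bound $|[\mu (h)] (g) | \leq \no{g}$, which powers the dominated convergence step.
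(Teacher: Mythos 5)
Your proof is correct, but it takes a genuinely different route from the paper's. You fix a state $T\in\sh$, scalarize everything into the finite positive Borel measures $p^{\Eo}_T$ and $\nu_T$, observe that they induce the same functional on $C_0(\Omega)$ by Theorem~\ref{th:first}(iv), and then invoke the uniqueness clause of the Riesz representation theorem to conclude they agree on all of $\bor{\Omega}$. The paper instead stays at the operator level throughout: it defines $\Fo(X)=\int_\Sigma[\mu(h)](X)\de\PP(h)$ as an observable, proves $\Eo(C)=\Fo(C)$ for compact $C$ by choosing $f_n\downarrow 1_C$ in $C_0(\Omega)$ and passing to the limit by dominated convergence, and then extends to arbitrary Borel sets via the inner regularity property $\Eo(X)=\LUB{\Eo(C)\mid C \textrm{ compact subset of } X}$. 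The two arguments ultimately rest on the same fact --- regularity of finite Borel measures on a lcsc space (Folland, Ch.~7) --- but you outsource it to Riesz uniqueness, whereas the paper makes the compact approximation explicit and handles the least upper bound of operators directly. Your route is arguably shorter and avoids the auxiliary lemma on the existence of the decreasing sequence $f_n\downarrow 1_C$; the paper's has the side benefit of exhibiting $\Fo$ as an observable and never leaving $\lh$. One point you should make explicit: the uniqueness part of Riesz applies to \emph{regular} measures, so you need to note that both $p^{\Eo}_T$ and your $\nu_T$, being finite Borel measures on a lcsc space, are automatically regular --- this is exactly the citation the paper makes, so it is a presentational omission rather than a gap.
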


\begin{proof}
For every $X\in\bor{\Omega}$, define
\begin{equation*}
\Fo (X) = \int_\Sigma [\mu(h)] (X) \de \PP (h) \, .
\end{equation*}
Then $\Fo$ is an observable (in the sense of definition {\rm (a)}). If $C\subset \Omega$ is compact, and $\{ f_n \}_{n\in\N}$ is a sequence in $C_0 (\Omega)$ such that $f_n \downarrow 1_C$ (such a sequence exists by Lemma 6.2.8 in \cite{AN89}), then by dominated convergence theorem $\Eo (f_n) \downarrow \Eo (C)$, $[\mu(h)] (f_n) \downarrow [\mu(h)] (C)$ for all $h$, and $\int_\Sigma [\mu(h)] (f_n) \de \PP (h) \downarrow \Fo (C)$. By eq.~(\ref{eq. princ.}) and uniqueness of the limit, $\Eo(C) = \Fo(C)$.

As $\Omega$ is a lcsc space, every probability measure $p^{\Eo}_T$, with $T\in\sh$, is regular (see e.g. Chapter 7 in \cite{RA99}). Therefore, the observable $\Eo$ has the following regularity property:
\begin{equation*}
\Eo(X) = \LUB{\Eo(C)\mid C \textrm{ compact subset of } X } \, .
\end{equation*}
It then follows that $\Eo(X) = \Fo(X)$ for all $X\in\bor{\Omega}$.
\end{proof}

If $\PP$ is a PVM on $B(\Omega)^+$ such that $\PP (B(\Omega)^+ \setminus P(\Omega)) = \nil$, then it is easy to check that the formula
\begin{equation*}
\Eo (f) = \int_{B(\Omega)^+} \mu(f) \de \PP(\mu) \quad \forall f\in C_0 (\Omega)
\end{equation*}
defines a commutative observable $\Eo$ on $\Omega$. Collecting the above results, we obtain the following converse of this fact.

\begin{theorem}\label{th:second}
Let $\Eo$ be a commutative observable on $\Omega$. Then there exists a unique sharp observable $\PP$ on $B(\Omega)^+$ such that
\begin{equation}\label{eq:rep-1}
\Eo (f) = \int_{B(\Omega)^+} \mu (f) \de \PP (\mu) \quad \forall f\in C_0 (\Omega) \, ,
\end{equation}
or, equivalently,
\begin{equation}\label{eq:rep-2}
\Eo (X) = \int_{B(\Omega)^+} \mu (X) \de \PP (\mu) \quad \forall X\in \bor{\Omega} \, .
\end{equation}
Moreover, $B(\Omega)^+ \setminus P(\Omega)$ is a $\PP$-null subset of $B(\Omega)^+$, and $\Eo^\prime = \PP^\prime$.
\end{theorem}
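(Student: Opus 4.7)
My plan is to construct $\PP$ as the pushforward of the sharp observable from Theorem \ref{th:first} along the weak*-continuous injection $\mu$, and to obtain uniqueness by reducing it to the uniqueness clause of that same theorem.

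For existence, I would invoke Theorem \ref{th:first} to obtain a triple $(\Sigma,\PP_\Sigma,\mu)$. Since $\mu$ is weak*-continuous, it is Borel measurable from $\Sigma$ into $B(\Omega)^+$, so setting $\PP(Y):=\PP_\Sigma(\mu^{-1}(Y))$ for $Y\in\bor{B(\Omega)^+}$ defines a projection valued measure on $B(\Omega)^+$: multiplicativity passes through $\mu^{-1}$, since the latter commutes with intersection, and $\PP(B(\Omega)^+)=\PP_\Sigma(\Sigma)=I$. The change-of-variables formula applied with integrand $\nu\mapsto\nu(f)$, combined with eq.~\eqref{eq. princ.}, produces \eqref{eq:rep-1}. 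Equation \eqref{eq:rep-2} follows the same way, now using Proposition \ref{misMark} to know that $\nu\mapsto\nu(X)$ is Borel measurable, together with Proposition \ref{caratt. di E sui Borel}.

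The two additional claims come almost for free. For the $\PP$-nullity of $B(\Omega)^+\setminus P(\Omega)$, Proposition \ref{Prop. supp. di P} asserts that $\{h\in\Sigma : \mu(h)\notin P(\Omega)\}$ is $\PP_\Sigma$-null, and this set is exactly $\mu^{-1}(B(\Omega)^+\setminus P(\Omega))$. For the commutant equality, I would observe that $\mu$ is a homeomorphism onto its image, being a continuous injection on the compact space $\Sigma$, so the ranges $\{\PP(Y) : Y\in\bor{B(\Omega)^+}\}$ and $\{\PP_\Sigma(Z) : Z\in\bor{\Sigma}\}$ coincide as sets of projections in $\lh$; therefore $\PP'=\PP_\Sigma'=\Eo'$ by the final assertion of Theorem \ref{th:first}.

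The main work is uniqueness. Given two PVMs $\PP_1,\PP_2$ on $B(\Omega)^+$ satisfying \eqref{eq:rep-1}, I would set $\Sigma_i:=\supp\PP_i$. Since $B(\Omega)^+$ is compact and metrizable, each $\Sigma_i$ is as well, and the inclusion $\iota_i\colon\Sigma_i\hookrightarrow B(\Omega)^+$ is a weak*-continuous injection. Restricting the integral in \eqref{eq:rep-1} to $\Sigma_i$ (on which $\PP_i$ is supported) shows that the triple $(\Sigma_i,\PP_i|_{\Sigma_i},\iota_i)$ satisfies items (i)--(iv) of Theorem \ref{th:first}. The uniqueness clause of that theorem then furnishes a homeomorphism $\Phi\colon\Sigma_1\to\Sigma_2$ with $\iota_1=\iota_2\circ\Phi$; because each $\iota_i$ is a set-theoretic inclusion, this identity forces $\Sigma_1=\Sigma_2$ and $\Phi=\id$, whence $\PP_2|_{\Sigma_2}=\PP_1|_{\Sigma_1}\circ\Phi^{-1}=\PP_1|_{\Sigma_1}$, and so $\PP_1=\PP_2$ on all of $B(\Omega)^+$. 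The subtlest step is verifying that the triple associated to the support of an arbitrary candidate $\PP$ is legitimate input for Theorem \ref{th:first}; this hinges crucially on the compactness of $B(\Omega)^+$, which is what makes $\Sigma_i$ a compact metrizable space in the first place.
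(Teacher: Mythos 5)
Your construction and uniqueness argument coincide, step for step, with the paper's own proof: push the canonical PVM of Theorem \ref{th:first} forward along $\mu$ (a homeomorphism onto its image by compactness), get \eqref{eq:rep-2} from Proposition \ref{caratt. di E sui Borel}, the null-set claim from Corollary \ref{Prop. misurab. di P} and Proposition \ref{Prop. supp. di P}, and uniqueness by feeding the triple $(\supp\PP_i,\PP_i|_{\supp\PP_i},\iota_i)$ back into the uniqueness clause of Theorem \ref{th:first}, whence $\iota_1=\iota_2\circ\Phi$ forces $\Phi=\id$. All of that is sound, including your observation that the ranges of $\PP$ and $\PP_\Sigma$ coincide so that the commutants agree.

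There is, however, one omission. You prove uniqueness only within the class of PVMs satisfying \eqref{eq:rep-1}. The theorem asserts uniqueness for the two \emph{equivalent} conditions \eqref{eq:rep-1} and \eqref{eq:rep-2}, and while your constructed $\PP$ satisfies both, you have not ruled out that some \emph{other} PVM satisfies \eqref{eq:rep-2} without satisfying \eqref{eq:rep-1}; a priori the second class could be strictly larger, since $\nu\mapsto\nu(X)$ is only Borel, not weak*-continuous, and the support-based argument does not apply to it directly. The paper closes this with a short separate argument: given $f\in C_0(\Omega)$, take simple functions $\phi_n=\sum_i f(x_i)1_{E_i}$ with $\no{f-\phi_n}_\infty\le 1/n$ as in the first half of the proof of Proposition \ref{misMark}; then $\mu(\phi_n)\to\mu(f)$ with $|\mu(\phi_n)|\le\no{f}$ uniformly over $\mu\in B(\Omega)^+$, and dominated convergence shows that any two PVMs agreeing on all $\int\mu(X)\de\PP(\mu)$ also agree on all $\int\mu(f)\de\PP(\mu)$, reducing \eqref{eq:rep-2}-uniqueness to the \eqref{eq:rep-1}-uniqueness you already have. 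You should add this step; the rest stands as written.
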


\begin{proof}
Apply Theorem \ref{th:first}, and observe that $\mu : \Sigma \frecc B(\Omega)^+$, being a continuous injective mapping between compact metric spaces, is a homeomorphism of $\Sigma$ onto $\mu(\Sigma)$. Eq.~(\ref{eq:rep-1}) then follows transporting $\PP$ on $\mu(\Sigma)$ by means of such homeomorphism and extending $\PP$ to the whole $B(\Omega)^+$ by letting $\PP = \nil$ on $B(\Omega)^+\setminus \mu(\Sigma)$. Eq.~(\ref{eq:rep-2}) follows by Proposition \ref{caratt. di E sui Borel}. By Theorem \ref{th:first}, $\PP^\prime = \Eo^\prime$. By Corollary \ref{Prop. misurab. di P} and Proposition \ref{Prop. supp. di P}, $B(\Omega)^+ \setminus P(\Omega)$ is a $\PP$-null Borel set.

Suppose $\PP_1$, $\PP_2$ are two PVMs on $B(\Omega)^+$ satisfying eq.~(\ref{eq:rep-1}). For $i=1,2$, let $\Sigma_i = \supp \PP_i$, and let $j_i : \Sigma_i \frecc B(\Omega)^+$ be the inclusion mapping. Therefore,
$$
\Eo (f) = \int_{\Sigma_i} [j_i (\mu) ] (f) \de \PP_i (\mu) \quad \forall f\in C_0 (\Omega) .
$$
By Theorem \ref{th:first}, there exists a homeomorphism $\Phi : \Sigma_1 \frecc \Sigma_2$ such that $j_1 = j_2 \circ \Phi$ and $\PP_2 = \PP_1 \circ \Phi^{-1}$. Hence, $\Sigma_1 = \Sigma_2$, $\Phi = \id$, and $\PP_1 = \PP_2$. This shows uniqueness of $\PP$ in eq.~(\ref{eq:rep-1}).

Suppose $\PP_1$, $\PP_2$ are two PVMs on $B(\Omega)^+$ satisfying eq.~(\ref{eq:rep-2}). If $f\in C_0 (\Omega)$, choose a sequence $\{ \phi_n \}_{n\in \N} \subset \linf{\Omega}$ as in the first half of the proof of Proposition \ref{misMark}. For all $\mu\in B(\Omega)^+$, we have $\mu(\phi_n) \to \mu(f)$, $|\mu(\phi_n)| \leq \no{\phi_n} \leq \no{f}$, and
$$
\int_{B(\Omega)^+} \mu(\phi_n) \de \PP_1 (\mu) = \int_{B(\Omega)^+} \mu(\phi_n) \de \PP_2 (\mu) .
$$
By dominated convergence theorem and uniqueness of the limit,
$$
\int_{B(\Omega)^+} \mu(f) \de \PP_1 (\mu) = \int_{B(\Omega)^+} \mu(f) \de \PP_2 (\mu) .
$$
Since this holds for all $f\in C_0 (\Omega)$, $\PP_1 = \PP_2$ follows.
\end{proof}

Theorem \ref{th:second} can be thought of as expressing the commutative observable 
$\Eo$ as an {\em operator average} over probability measures. There is still another 
representation of a commutative observable, which we now discuss, and which in a sense 
expresses $\Eo$ as a probability average over sharp observables (PVMs) and is thus 
complementary to the preceding  representation.  A detailed 
proof of this result can be found in \cite{Ali82,Ali84}. Here we only describe the 
representation.

Suppose $\Eo$ is an observable (not necessarily commutative), and consider the set $\Eo^\prime$ defined in eq.~(\ref{def-E'}). The double commutant $(\Eo^\prime)^\prime$, which we denote by ${\mathfrak A}  (\Eo)$, is  then a von Neumann algebra, and hence a weakly closed algebra of operators in $\lh$. (Recall that a von Neumann algebra $\mathfrak A$ is a
$\ast$-invariant set of bounded operators on $\hi$ which is an algebra under the operator
product and equal to its double commutant, ${\mathfrak A} = {\mathfrak A}^{\prime \prime}$. Such an
algebra is necessarily weakly closed). If $C^\ast (\Eo)$ is the $C^\ast$ algebra generated in $\lh$ by the set $\{ \Eo (f) \mid f\in C_0 (\Omega) \}$ and $I$, then clearly $C^\ast (\Eo) \subset {\mathfrak A}  (\Eo)$.

Next, consider the set $\widehat{\mathfrak S}$ of all sharp observables on $\Omega$ the ranges of which lie in ${\mathfrak A}  (\Eo)$. Thus, a sharp observable $\PP$ is an element of $\widehat{\mathfrak S}$ if and only if
$\PP(X )\in {\mathfrak A}  (\Eo)$ for all $X \in \bor{\Omega}$. There is a natural structure of a measure space on $\widehat{\mathfrak S}$, so that one can define ordinary Borel measures on it. Let $\overline{\nu}$ be such a positive measure on
$\widehat{\mathfrak S}$, with $\overline{\nu}(\widehat{\mathfrak S}  ) = 1$ (i.e., $\overline{\nu}$ is a probability
measure on $\widehat{\mathfrak S}  $). Then the operators $\overline{\Eo}(X ), X
\in \bor{\Omega}$, defined through the integral representation
\begin{equation*}
\overline{\Eo}(X) = \int_{\widehat{\mathfrak S}  }\PP(X)\de\overline{\nu}(\PP),
\quad \mbox{\rm or, symbolically,} \quad
\overline{\Eo} = \int_{\widehat{\mathfrak S}}\PP\de\overline{\nu}(\PP)
\end{equation*}
generate an observable. The convergence of the
above integral is in the weak topology of $\lh$, meaning that for arbitrary $\phi,\psi\in\h$,
\begin{equation*}
   \langle\phi\vert\overline{\Eo}(X )\psi\rangle =
   \int_{\widehat{\mathfrak S}  }\langle\phi\vert \PP(X )\psi\rangle\;
			     \de\overline{\nu}(\PP) \, .
\end{equation*}
The interesting fact about the above representation of the observable $\overline{\Eo}$ as a probability average over the sharp observables $\PP$, is that $\overline{\Eo}$ and $\overline{\nu}$ determine each other uniquely.

If $\Eo$ is commutative, there exists a unique probability measure $\overline{\nu}=\nu$ for which
$\overline{\Eo}= \Eo$. This result is stated precisely below.

\begin{theorem}\label{th:third}
Let $\Eo$ be a commutative observable on $\Omega$. There exists a unique probability Borel measure $\nu$ on $\widehat{\mathfrak S}$ such that
\begin{equation}\label{commPOVmeas3}
\Eo(X) = \int_{\widehat{\mathfrak S}  } \PP(X )\de \nu (\PP) \, .
\end{equation}
the integral converging weakly. The measure $\nu$ is unaltered if  ${\mathfrak A}  (\Eo)$ is replaced by any other von Neumann algebra which contains ${\mathfrak A}  (\Eo)$.
\end{theorem}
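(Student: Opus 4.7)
The plan is to reduce to a spectral representation of the abelian von Neumann algebra $\mathfrak{A}(\Eo)$, characterise $\widehat{\mathfrak{S}}$ concretely as a space of measurable point maps, and then derive $\nu$ from Theorem \ref{th:second} by transport of structure.

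First I would verify that $\mathfrak{A}(\Eo)$ is abelian: since $\Eo$ is commutative, the family $\{\Eo(X)\mid X\in\bor{\Omega}\}$ is self-commuting, so $C^\ast(\Eo)$ is abelian and hence so is its double commutant $\mathfrak{A}(\Eo)=\{\Eo(X)\}''$. Then, invoking the spectral theorem for abelian von Neumann algebras on the separable Hilbert space $\hi$, I would represent $\mathfrak{A}(\Eo)\cong \linf{Y,\lambda}$ acting by multiplication on a suitable $L^2$ space, with $(Y,\lambda)$ a standard Borel space endowed with a finite measure. Under this isomorphism $\Eo$ becomes a Markov kernel $y\mapsto \Eo_y\in P(\Omega)$, i.e.\ $\Eo(X)(y)=\Eo_y(X)$ for $\lambda$-a.e.\ $y$.

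The key structural step is to describe $\widehat{\mathfrak{S}}$ concretely. A PVM $\PP:\bor{\Omega}\to \linf{Y,\lambda}$ sends each $X$ to a projection, which in $\linf{Y,\lambda}$ is a characteristic function $1_{A_X^\PP}$; multiplicativity $\PP(X\cap Z)=\PP(X)\PP(Z)$ and $\sigma$-additivity then make $X\mapsto A_X^\PP$ a $\sigma$-algebra homomorphism from $\bor{\Omega}$ into $\bor{Y}$ modulo $\lambda$-null sets, and a standard measurable-selection argument on the standard Borel space $Y$ produces a measurable map $\psi_\PP:Y\to\Omega$ with $A_X^\PP=\psi_\PP^{-1}(X)$, unique up to $\lambda$-null sets. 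Conversely any such $\psi$ defines a $\PP\in\widehat{\mathfrak{S}}$, so $\widehat{\mathfrak{S}}$ is naturally identified with the space of $\lambda$-equivalence classes of measurable maps $Y\to\Omega$, which I would equip with the Borel structure of convergence in $\lambda$-measure. The desired identity $\Eo(X)=\int \PP(X)\,d\nu(\PP)$ then reads pointwise in $y$ as $\Eo_y=(\mathrm{ev}_y)_\ast\nu$ for the evaluation map $\mathrm{ev}_y(\psi)=\psi(y)$.

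For existence I would apply Theorem \ref{th:second} to obtain the unique sharp observable on $B(\Omega)^+$, concentrated on $P(\Omega)$, that represents $\Eo$. The fibre map $y\mapsto \Eo_y$ exhibits $Y$ inside $P(\Omega)$ and pushes $\lambda$ forward to the scalar spectral measure of this PVM, so the PVM can be transported back to a probability measure $\nu$ on $\widehat{\mathfrak{S}}$ satisfying the required pointwise identity. Uniqueness follows by reversing the transport: any candidate $\nu$ produces, via its pushforward along the disintegration map, a second PVM on $P(\Omega)$ representing $\Eo$ in the sense of Theorem \ref{th:second}, and the uniqueness clause there forces $\nu$ to be unique. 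The invariance under enlargement $\mathfrak{B}\supset\mathfrak{A}(\Eo)$ is then immediate, since $\widehat{\mathfrak{S}}_{\mathfrak{A}(\Eo)}\subset\widehat{\mathfrak{S}}_\mathfrak{B}$ and the pushforward of $\nu$ along the inclusion is still supported in $\widehat{\mathfrak{S}}_{\mathfrak{A}(\Eo)}$ and represents $\Eo$, so by uniqueness it equals the representing measure in $\widehat{\mathfrak{S}}_\mathfrak{B}$. The main obstacle will be making the Borel structure on $\widehat{\mathfrak{S}}$ canonical — in particular showing it is independent of the spectral isomorphism $\mathfrak{A}(\Eo)\cong \linf{Y,\lambda}$ — and cleanly handling the $\lambda$-null ambiguities in the PVM/point-map bijection.
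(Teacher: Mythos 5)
First, a point of comparison: the paper itself contains no proof of Theorem \ref{th:third} — it is stated explicitly as taken from \cite{Ali82,Ali84} ``without proof, describing only its content'', and even the measure-space structure on $\widehat{\mathfrak S}$ is only asserted to exist. So there is no internal argument to measure your proposal against, and it must stand on its own. Your existence construction is in the right spirit and consistent with the worked example at the end of Section \ref{sec:commutative}: realize the abelian algebra $\mathfrak{A}(\Eo)$ as $L^\infty(Y,\lambda)$ over a standard measure space, disintegrate $\Eo$ into a Markov kernel $y\mapsto\Eo_y$, identify PVMs with range in $\mathfrak{A}(\Eo)$ with ($\lambda$-classes of) measurable point maps $\psi:Y\frecc\Omega$, and transport a measure. (Minor caveat: $\mathfrak{A}(\Eo)$ need not act by multiplication on $L^2(Y,\lambda)$ — there may be multiplicity — but only the algebra isomorphism is used, so this is harmless.)

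The genuine gap is in the uniqueness step, and it is not the cosmetic issue you flag at the end (canonicity of the Borel structure) but a substantive one. After your reduction, the identity $\Eo(X)=\int\PP(X)\de\nu(\PP)$ says exactly that $(\mathrm{ev}_y)_\ast\nu=\Eo_y$ for $\lambda$-a.e.\ $y$; it constrains only the \emph{one-point marginals} of $\nu$ regarded as the law of a random map $\psi:Y\frecc\Omega$. Consequently your appeal to Theorem \ref{th:second} can only show that these marginals — equivalently the PVM $y\mapsto\Eo_y$ — are uniquely determined; it cannot determine the joint law, because the ``pushforward along the disintegration map'' forgets everything except the marginals. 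Worse, with the Borel structure you propose (convergence in $\lambda$-measure on point maps) uniqueness is actually false. Take the paper's own example: $\hi=L^2(\R)$, $\Eo(X)$ multiplication by $1_X*\nu$ with $\nu$ non-degenerate, so that $\mathfrak{A}(\Eo)=L^\infty(\R)$ and $\widehat{\mathfrak S}$ is all measurable maps $\R\frecc\R$ mod null sets. The law of the random translation $\psi(y)=y-t$ with $t\sim\nu$, and the law of the map equal to $y-t_1$ on $(-\infty,0)$ and to $y-t_2$ on $[0,\infty)$ with $t_1,t_2$ i.i.d.\ $\sim\nu$, are two distinct Borel probability measures on $\widehat{\mathfrak S}$ that both reproduce $\Eo(X)$ as a weak integral, since each has the correct one-point marginals $\overline{\nu}_y$. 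Hence the entire content of the uniqueness clause (and of the ``unaltered under enlargement of $\mathfrak{A}(\Eo)$'' clause) resides in the specific, much coarser measure-space structure that \cite{Ali82,Ali84} place on $\widehat{\mathfrak S}$ — one under which a measure is determined by precisely the data your argument controls. Until you state and use that structure, these two claims remain unproved, and for the structure you chose they fail.
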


\subsection*{\bf A simple example}

The representation theorems for commutative observables can be illustrated by
means of a simple example. Let $\hi = L^2(\R, dx )$ and let $\nu$ be a fixed probability
measure defined on $\bor\R$. On $\hi$ we define the commutative observable
$\Eo (X), \; X \in \bor\R$, as the operator of multiplication by the function $x \mapsto
1_X * \nu (x)$, i.e., for any $\phi \in \hi$
\begin{equation*}
(\Eo (X)\phi )(x) = 1_X * \nu (x) \phi (x) \, , \quad \text{where} \quad
    1_X * \nu (x) = \int_\R 1_X (x - y )\; d\nu (y) \, ,
\end{equation*}
$1_X$ being the characteristic function of the set $X$. Defining the sharp observable $\PP_x$ for each $x \in \R$ and writing $\PP_0  = \PP$:
\begin{equation*}
(\PP_x (X) \phi)(y) = 1_X (y - x )\phi (y)\, , \qquad X \in\bor\R\, ,
\phi \in \hi,
\end{equation*}
we easily see that $\Eo (X)$ may be written in the form
\begin{equation}
\Eo (X) = \int_\R \PP_x (X) \;d\nu (x)\; ,
\label{first_rep_ex}
\end{equation}
as a weak integral. This is the representation given in eq.~(\ref{commPOVmeas3}).
On the other hand, it is also clear that $\Eo (X)$ may alternatively be
written in the form:
\begin{equation*}
(\Eo (X)\phi )(x) = \overline{\nu}_x (X) \phi (x)\; ,
\end{equation*}
again as a weak integral and
where $\overline{\nu}$ is the probability measure $\overline{\nu}_x  (X) =
\nu (x - X)$ and $\overline{\nu}_0 = \overline{\nu}$. Thus,
considering $x\mapsto \overline{\nu}_x (X)$ as a multiplication operator, for each
fixed $X \in \bor\R$, we get
\begin{equation} \label{second_rep_ex}
\Eo (X) = \int_\R \overline{\nu}_x (X) \de \PP (x) = \int_{B(\R)^+} \mu (X) \de \tilde{\PP} (\mu) \, ,
 \end{equation}
where $\tilde{\PP} = \PP \circ \Phi^{-1}$ with $\Phi : \R \frecc B(\R)^+$, $\Phi (x) = \overline{\nu}_x$. The last formula is the expression of $\Eo$ in the form (\ref{eq:rep-2}).
In (\ref{first_rep_ex}) the commutative observable $\Eo$ is written as an integral
over the sharp observables $\PP_x$, with respect to the probability measure $\nu$ carried by these sharp observables, while in (\ref{second_rep_ex}), $\Eo$ is written as an integral over the probability measures $\nu_x$, with respect to the sharp observable $\PP$, supported by the probability measures. If $f \in C_0 (\R)$ is a real function, we may think of it as
a sharp classical observable on the value space $\R$ while $f*\nu$ is an averaged out
version of $f$. Thus, since
\begin{equation*}
\Eo (f) = \int_\R \PP_x (f) \;d\nu (x) = \int_\R \overline{\nu}_x (f)\; d\PP (x)\; ,
\end{equation*}
we may call the first representation an unsharp quantum operator corresponding to a
sharp classical observable and the second representation as the sharp quantum operator of an unsharp classical observable.

%%%%%%%%%%%%%%%%%%%%%%%%%%%%%%%%%%%%%%%%%%%%
\section{Fuzzy observables}\label{sec:fuzzy}
%%%%%%%%%%%%%%%%%%%%%%%%%%%%%%%%%%%%%%%%%%%%

Let $\Omega_1$ and $\Omega_2$ be two lcsc spaces. Recall that a mapping $\mu:\Omega_2\to P(\Omega_1)$ is a \emph{Markov kernel} if $x\mapsto [\mu(x)](X)$ is measurable for every $X\in\bor{\Omega_1}$. By Proposition \ref{misMark}, this is equivalent to weak*-measurability and measurability of $\mu$. We will sometimes use the notation $\mu_{x}(X)\equiv [\mu(x)](X)$.

\begin{definition}\label{def:fuzzy}
Let $\Eo$ and $\Fo$ be two observables on $\Omega_1$ and $\Omega_2$, respectively. We say that $\Eo$ is a \emph{fuzzy observable with respect to} $\Fo$, or {\em fuzzy version of} $\Fo$, if there exists a Markov kernel $\mu : \Omega_2 \frecc P(\Omega_1)$ such that
\begin{equation}\label{eq:fuzzy-1}
p^{\Eo}_T (X)  = \int_{\Omega_2} [\mu (x)] (X) \de p^{\Fo}_T (x) \quad \forall X\in\bor{\Omega_1},\, T\in\sh \, .
\end{equation}
\end{definition}

The condition \eqref{eq:fuzzy-1} can be written as
\begin{equation}\label{eq:fuzzy-2}
\Eo(X)  = \int_{\Omega_2} [\mu (x)] (X) \de\Fo(x) \quad \forall X\in\bor{\Omega_1}
\end{equation}
or, equivalently,
\begin{equation*}
\Eo(f)  = \int_{\Omega_2} [\mu (x)] (f) \de\Fo(x) \quad \forall f\in C_0 (\Omega_1)
\end{equation*}
where, as usual, the integrals are understood in the weak sense.

Definition \ref{def:fuzzy} fits well with the concept of a fuzzy set. Namely, the mapping $[\mu (\cdot)] (X)$ is a fuzzy set for each $X\in\bor{\Omega_1}$. We can see the observable $\Eo$ as a composite mapping
\begin{equation*}
X \mapsto [\mu (\cdot)] (X) \mapsto \int_{\Omega_2} [\mu (x)] (X) \de\Fo(x) = \Fo ([\mu (\cdot)] (X)) \, .
\end{equation*}
The first mapping 'fuzzifies' outcome sets and then the observable $\Fo$ is applied on these fuzzy sets.

It is clear from \eqref{eq:fuzzy-2} that if $\Fo$ is a commutative observable, then also its fuzzy version $\Eo$ is commutative. In particular, all fuzzy versions of sharp observables are commutative. The results of Section \ref{sec:commutative} imply that also the converse is true, namely, all commutative observables are fuzzy versions of sharp observables.

Obviously, the representations for a commutative observable given in Section \ref{sec:commutative} are very close to the definition of a fuzzy version. If the set of outcomes $\Omega$ is compact, then by taking into account Proposition \ref{prop:compact} we see that \eqref{eq:borel} in Proposition \ref{caratt. di E sui Borel} gives $\Eo$ as a fuzzy version of $\PP$.
To see that the general (i.e. $\Omega$ not necessarily compact) representation form \eqref{eq:rep-2} formally satisfies the properties of Definition \ref{def:fuzzy}, let $\Eo$ be a commutative observable on $\Omega$. We then have the sharp observable $\PP$ given in Theorem \ref{th:second}. To show that $\Eo$ is a fuzzy version of $\PP$, we need to define a corresponding Markov kernel $\mu$ from $B(\Omega)^+$ to $P(\Omega)$. Fix a probability measure $\nu_0\in P(\Omega)$ and define $\mu$ as
\begin{equation*}
\mu:\quad \nu \mapsto \nu \quad \textrm{if $\nu\in P(\Omega)$} \, ; \quad \nu \mapsto \nu_{0} \quad  \textrm{if $\nu\in B(\Omega)^+\setminus P(\Omega)$}\, .
\end{equation*}
Then $\mu$ is measurable and hence a Markov kernel. As stated in Theorem \ref{th:second}, the set $B(\Omega)^+ \setminus P(\Omega)$ is a $\PP$-null set. Hence, we can write \eqref{eq:rep-2} in the form
\begin{equation*}
\Eo(X) = \int_{B(\Omega)^+} \nu(X) \de \PP(\nu) = \int_{B(\Omega)^+} [\mu(\nu)](X) \de \PP(\nu) \, .
\end{equation*}
Therefore, $\Eo$ is a fuzzy version of $\PP$.

\begin{example}\label{ex:covariant-markov}
Let us consider again the example discussed in the end of Section \ref{sec:commutative}. Hence, let $\nu$ be a fixed probability measure on $\R$. It defines a Markov kernel
\begin{equation*}
\R \ni x \mapsto \overline{\nu}_x \in P(\R)\, .
\end{equation*}
Here $\overline{\nu}_x$ is the probability measure $\overline{\nu}_x(X)=\nu(x-X)$. The fuzzification of a set $X\in\bor{\R}$ is now of the convolution form
\begin{equation*}
X \mapsto [1_X \ast \nu](\cdot) \, .
\end{equation*}

This type of Markov kernels can be defined in a similar way when $\Omega$ is some other lcsc group than $\R$. They typically arise in the situations where the observables under investigation have a covariance property. We refer to \cite{HeLaYl04} for further details.
\end{example}

Being a fuzzy observable with respect to some other observable is a relative notion - one has to specify two observables that we are comparing together. We can also say that $\Eo$ is a \emph{fuzzy observable} without further specification if it is a fuzzy version of some other observable $\Fo$ and this observable $\Fo$ is not a fuzzy version of $\Eo$.

There are then also those observables which are not fuzzy observables. The following result has been proven by Jen\v{c}ov\'{a} and Pulmannov\'{a} in \cite{JePu07}.

\begin{proposition}\label{prop:not-fuzzy-1}
A sharp observable $\PP$ is not a fuzzy observable if and only if the von Neumann algebra ${\mathfrak A}(\PP)$ is a maximal commutative von Neumann subalgebra of $\lh$.
\end{proposition}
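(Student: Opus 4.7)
The plan is to prove the two directions of the equivalence separately.

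\emph{Maximality implies non-fuzziness.} Suppose ${\mathfrak A}(\PP)$ is maximal commutative and that $\PP$ (on $\Omega_1$) is a fuzzy version of some observable $\Fo$ (on $\Omega_2$) via a Markov kernel $\mu:\Omega_2\to P(\Omega_1)$; the aim is to show $\Fo$ is itself a fuzzy version of $\PP$. Writing $g_X(x):=[\mu(x)](X)$, the defining relation reads $\PP(X)=\Fo(g_X)$ in the $\linf$-sense of eq.~\eqref{link2}. The principal obstacle is establishing $[\Fo(Y),\PP(X)]=0$ for all $X\in\bor{\Omega_1}$, $Y\in\bor{\Omega_2}$. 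I would use a two-sided sandwich: since $0\le 1_Y g_X\le g_X$ pointwise, positivity of the extension $\Fo:\linf{\Omega_2}\to\lh$ gives
\[
0\le \Fo(1_Y g_X)\le \Fo(g_X)=\PP(X),
\]
and the elementary lemma ``$0\le A\le P$ with $P$ a projection $\Rightarrow AP=PA=A$'' forces $[\Fo(1_Y g_X),\PP(X)]=0$. The same reasoning applied to $1_Y(1-g_X)\le 1-g_X=g_{\Omega_1\setminus X}$ yields $0\le \Fo(Y)-\Fo(1_Y g_X)\le I-\PP(X)$, so this operator also commutes with $\PP(X)$; summing gives $[\Fo(Y),\PP(X)]=0$.

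Maximality of ${\mathfrak A}(\PP)$ then upgrades this to $\Fo(Y)\in{\mathfrak A}(\PP)'={\mathfrak A}(\PP)$ for every $Y$, so ${\mathfrak A}(\Fo)\subseteq{\mathfrak A}(\PP)$ and $\Fo$ is commutative. To convert this into a Markov-kernel realization of $\Fo$ as a fuzzy version of $\PP$, I would use the bounded Borel functional calculus of $\PP$ to write $\Fo(Y)=\PP(h_Y)$ for some $h_Y\in\linf{\Omega_1}$, and then invoke a standard disintegration argument, available because $\Omega_1$ and $\Omega_2$ are lcsc (hence Polish), to pick simultaneous pointwise representatives of the $h_Y$ that assemble into a probability measure $\mu'(x)\in P(\Omega_2)$ with $[\mu'(x)](Y)=h_Y(x)$. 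This exhibits $\Fo$ as a fuzzy version of $\PP$, so no $\Fo$ can make $\PP$ a fuzzy observable.

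\emph{Non-maximality implies fuzziness.} Conversely, if ${\mathfrak A}(\PP)$ is not maximal commutative, pick a self-adjoint $A\in{\mathfrak A}(\PP)'\setminus{\mathfrak A}(\PP)$ and let $\QQ_A$ be its spectral PVM on $\Sigma:=\sigma(A)\subset\R$. Since $\QQ_A$ commutes with $\PP$, the prescription $\QQ(X\times Y):=\PP(X)\QQ_A(Y)$ extends uniquely to a joint PVM on the lcsc space $\Omega_1\times\Sigma$. The Markov kernel $(x,s)\mapsto\delta_x$ realizes $\PP$ as a fuzzy version of $\QQ$, because $\int 1_X(x)\,d\QQ(x,s)=\QQ(X\times\Sigma)=\PP(X)$. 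But $\QQ$ is not a fuzzy version of $\PP$: otherwise $\QQ_A(Y)=\QQ(\Omega_1\times Y)\in{\mathfrak A}(\PP)$ for all $Y$, contradicting $A\in{\mathfrak A}(\QQ_A)\setminus{\mathfrak A}(\PP)$. Hence $\QQ$ witnesses $\PP$ as a fuzzy observable. The only truly substantive input is the sandwich that forces $[\Fo(Y),\PP(X)]=0$ in the first half; everything else combines maximality, the functional calculus of $\PP$, standard disintegration on Polish spaces, and the routine product-PVM construction.
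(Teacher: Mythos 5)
The paper itself does not prove this proposition: it is quoted from Jen\v{c}ov\'a and Pulmannov\'a \cite{JePu07}, so there is no in-paper argument to measure yours against. Judged on its own, your proof is correct in substance, and its real content is exactly where you locate it: the two-sided sandwich $\nil\le \Fo(1_Y g_X)\le \Fo(g_X)=\PP(X)$ and $\nil\le \Fo(Y)-\Fo(1_Y g_X)\le I-\PP(X)$, together with the elementary fact that an operator squeezed between $\nil$ and a projection commutes with that projection, which forces $\Fo(Y)\in\PP'={\mathfrak A}(\PP)'$ for every $Y$ (here you implicitly use the paper's Proposition 1 to identify $\PP'$ with $\{\PP(X)\}'$). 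Two points deserve more care than you give them. First, the identification ${\mathfrak A}(\PP)=\{\PP(h)\mid h\in\linf{\Omega_1}\}$ rests on separability of $\hi$ (existence of a scalar spectral measure and weak closedness of the range of the $L^\infty$ functional calculus); this holds under the paper's standing hypotheses but should be stated. Second, the ``standard disintegration argument'' is the genuinely technical step and is a sketch rather than a proof: each $h_Y$ is determined only up to a null set, $\sigma$-additivity of $Y\mapsto h_Y$ holds only almost everywhere for each fixed disjoint sequence, and one must exploit that $\bor{\Omega_2}$ is countably generated on the Polish space $\Omega_2$ to extract a single null set off which $Y\mapsto h_Y(x)$ is an honest probability measure (the existence of a regular conditional distribution); this is true but not a one-line appeal, and it is the analogue in your argument of the uniqueness machinery the paper deploys in Theorem \ref{th:second} for the sharp-to-sharp case. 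The converse direction is clean and complete: the product PVM $\QQ$ of $\PP$ with the spectral measure of a self-adjoint $A\in{\mathfrak A}(\PP)'\setminus{\mathfrak A}(\PP)$ has $\PP$ as a marginal, while any kernel exhibiting $\QQ$ as a fuzzy version of $\PP$ would force $\QQ_A(Y)=\PP(h_Y)\in{\mathfrak A}(\PP)$ and hence $A\in{\mathfrak A}(\PP)$, a contradiction.
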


This shows, in particular, that the usual sharp observables describing physical quantities such as position, momentum and spin of a particle and photon number of an electromagnetic field are not fuzzy observables. We will give an other formulation of this property in Section \ref{sec:joint}.

It is interesting to note that not all physical quantities are described by sharp observables and hence Proposition \ref{prop:not-fuzzy-1} cannot be applied to those cases. For instance,  it would be interesting to know whether some of the phase observables are fuzzy or not. (See Chapter III in \cite{OQP97} for an explanation of the phase observables and some other physical quantities not described by sharp observables.)

Markov kernels can add different kind of fuzziness to outcome sets. First of all, a Markov kernel can make outcome sets blurred. This happens, for instance, if in the situation of Example \ref{ex:covariant-markov} the probability measure $\nu$ is of the form $d\nu(x)=f(x)dx$ and $f$ is some smooth function (e.g. Gaussian function). This can typically model noise or imprecision in a measurement.

There is also another class of Markov kernels, which describe different kind of reduction. Namely, it can happen that a Markov kernel maps sets into sets (and not into genuine fuzzy sets), in which case there is thus no blurring. The possible reduction is now a consequence of the fact that two or more outcomes may become identified as one.

This latter kind of situation corresponds to a Markov kernel $\mu$ determined by a measurable function $\Phi:\Omega_2\to\Omega_1$, i.e., $\mu(x) = \delta_{\Phi(x)}$ ($\delta_x$ is the Dirac measure at $x$). The function $\Phi$ simply relabels the outcomes, possibly giving same label to several different outcomes. Eq.~\eqref{eq:fuzzy-2} now reads
\begin{equation}\label{eq:fuzzy-function}
\Eo(X)=\Fo(\Phi^{-1}(X)) \qquad \forall X\in\bor{\Omega_1} \, ,
\end{equation}
which we also write as
\begin{equation*}
\Eo=\Fo \circ \Phi^{-1} \, .
\end{equation*}

Our next proposition shows that fuzzification between two sharp observables can only be of the non-blurring form \eqref{eq:fuzzy-function}. This result follows immediately by combining Propositions 1 and 2 in \cite{Ho85} even in the more general case in which $\PP_2$ is an observable (not necessarily sharp). In this general form it has also been proved in \cite{JePu07}. In our case, it is a simple application of Theorem \ref{th:second}. We present this new proof in order to illustrate the role of the uniqueness part of Theorem \ref{th:second}

\begin{proposition}
Let $\PP_1$ and $\PP_2$ be sharp observables on $\Omega_1$, $\Omega_2$, respectively, with values in $\hi$. If $\PP_1$ is a fuzzy version of $\PP_2$, then there exists a measurable function $\Phi : \Omega_2 \frecc \Omega_1$ such that $\PP_1 = \PP_2 \circ \Phi^{-1}$.
\end{proposition}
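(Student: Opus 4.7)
My plan is to apply the uniqueness clause of Theorem \ref{th:second} to $\PP_1$, viewed as a commutative observable on $\Omega_1$. The idea is that a sharp observable has two canonical representations of the form $\int \nu(X)\de\QQ(\nu)$ on $B(\Omega_1)^+$: one that ``concentrates'' on Dirac measures, and one that uses the given Markov kernel $\mu$. Uniqueness forces them to coincide, which pins down $\mu$ to live on Dirac measures $\PP_2$-almost everywhere.

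Concretely, let $\delta:\Omega_1\frecc B(\Omega_1)^+$ be the Dirac embedding $x\mapsto\delta_x$. It is weak*-continuous (because for each $f\in C_0(\Omega_1)$ the evaluation $x\mapsto\delta_x(f)=f(x)$ is continuous), injective, and its image $D=\delta(\Omega_1)$ is a Borel subset of $B(\Omega_1)^+$ on which the inverse $\delta^{-1}:D\frecc\Omega_1$ is Borel measurable, by the Lusin-Souslin theorem applied to the Polish spaces $\Omega_1$ and $B(\Omega_1)^+$. Set $\QQ_0=\PP_1\circ\delta^{-1}$, which is a PVM on $B(\Omega_1)^+$, and observe that
\begin{equation*}
\int_{B(\Omega_1)^+}\nu(X)\de\QQ_0(\nu)=\int_{\Omega_1}\delta_x(X)\de\PP_1(x)=\PP_1(X) .
\end{equation*}
On the other hand, the hypothesis supplies a Markov kernel $\mu:\Omega_2\frecc P(\Omega_1)\subset B(\Omega_1)^+$ with $\PP_1(X)=\int_{\Omega_2}[\mu(x)](X)\de\PP_2(x)$. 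By Proposition \ref{misMark}, $\mu$ is Borel measurable as a map into $B(\Omega_1)^+$, so the pushforward $\QQ_1=\PP_2\circ\mu^{-1}$ is also a PVM on $B(\Omega_1)^+$, and it likewise represents $\PP_1$ via equation \eqref{eq:rep-2}.

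Uniqueness in Theorem \ref{th:second} yields $\QQ_0=\QQ_1$. Since $\QQ_0(B(\Omega_1)^+\setminus D)=\nil$, the Borel set $N:=\mu^{-1}(B(\Omega_1)^+\setminus D)\subset\Omega_2$ satisfies $\PP_2(N)=\QQ_1(B(\Omega_1)^+\setminus D)=\nil$. Define $\Phi:\Omega_2\frecc\Omega_1$ by $\Phi(x)=\delta^{-1}(\mu(x))$ for $x\notin N$, and $\Phi(x)=x_0$ on $N$ for some fixed $x_0\in\Omega_1$; this is Borel measurable as a composition of measurable maps off a Borel null set. For every $X\in\bor{\Omega_1}$, the integrand $[\mu(x)](X)=\delta_{\Phi(x)}(X)=1_{\Phi^{-1}(X)}(x)$ holds outside $N$, so
\begin{equation*}
\PP_1(X)=\int_{\Omega_2\setminus N}1_{\Phi^{-1}(X)}(x)\de\PP_2(x)=\PP_2(\Phi^{-1}(X)) ,
\end{equation*}
which is the desired identity $\PP_1=\PP_2\circ\Phi^{-1}$.

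The main obstacle is the descriptive set-theoretic step: guaranteeing that $D$ is Borel in $B(\Omega_1)^+$ and that $\delta^{-1}$ admits a Borel measurable extension, so that $\Phi$ can be chosen measurable rather than just well-defined on a full-measure set. Once $\Omega_1$ is recognized as Polish (a lcsc Hausdorff space) and $B(\Omega_1)^+$ as compact metrizable, this is a standard application of Lusin-Souslin to the injective Borel map $\delta$; the remainder of the argument is then a direct application of the uniqueness part of Theorem \ref{th:second} combined with an elementary calculation.
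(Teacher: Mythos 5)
Your proposal is correct and follows essentially the same route as the paper's own proof: both compare the pushforward PVMs $\PP_1\circ\delta^{-1}$ and $\PP_2\circ\mu^{-1}$ on $B(\Omega_1)^+$, invoke the uniqueness clause of Theorem \ref{th:second} to identify them, deduce that $\mu$ takes values in the Dirac measures $\PP_2$-almost everywhere, and set $\Phi=\delta^{-1}\circ\mu$ modified on a null set. The only cosmetic difference is that you justify the measurability of $\delta(\Omega_1)$ and of $\delta^{-1}$ via Lusin--Souslin, while the paper notes that $\delta$ is a homeomorphism onto a locally closed subset of $B(\Omega_1)^+$.
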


\begin{proof}
Let $B_i^+ = B(\Omega_i)^+$. The map $\delta : \Omega_1 \frecc B_1^+$ given by
\begin{equation*}
[\delta (x)] (f) = f(x) \quad \forall x\in \Omega_1, f\in C_0 (\Omega_1)
\end{equation*}
is a homeomorphism of $\Omega_1$ into $\delta(\Omega_1)$, and $\delta(\Omega_1)$ is a locally closed subset of $B_1^+$.

Let $\mu : \Omega_2 \frecc B_1^+$ be a Markov kernel such that
\begin{equation}\label{ref}
\PP_1 (f) = \int_{\Omega_2} [\mu (x)] (f) \de \PP_2 (x) \quad \forall f\in C_0 (\Omega_1) \, .
\end{equation}

We claim that $\mu(x) \in \delta(\Omega_1)$ for $\PP_2$-almost all $x$. By eq.~\eqref{ref},
\begin{equation*}
\PP_1 (f) = \int_{B^+_1} \nu (f) \de (\PP_2 \circ \mu^{-1}) (\nu) \, .
\end{equation*}
On the other hand
\begin{equation*}
\PP_1 (f) = \int_{\Omega_1} [\delta (x)] (f) \de \PP_1 (x) = \int_{B^+_1} \nu (f) \de (\PP_1 \circ \delta^{-1}) (\nu) \, .
\end{equation*}
By the uniqueness part in Theorem \ref{th:second}, we have
\begin{equation*}
\PP_2 \circ \mu^{-1} = \PP_1 \circ \delta^{-1} \, .
\end{equation*}
It follows that $\mu^{-1} (B^+_1 \setminus \delta(\Omega_1) )$ is a $\PP_2$-null set, i.e., there exists $Z\in\bor{\Omega_2}$ such that $\PP_2 (Z) = \nil$ and $\mu(x) \in \delta(\Omega_1)$ for all $x\in \Omega_2 \setminus Z$.

Setting $\Phi(x) = \delta^{-1} (\mu(x))$ for all $x\in \Omega_2 \setminus Z$ and $\Phi(x) = {\rm constant}$ for all $x\in Z$, we obtain a measurable function $\Phi : \Omega_2 \frecc \Omega_1$, and
\begin{eqnarray*}
(\PP_2 \circ \Phi^{-1} )(X) & = & \PP_2 (\Phi^{-1} (X) \setminus Z) = \PP_2 (\mu^{-1} (\delta(X)) ) \\
& = & \PP_1 (\delta^{-1} (\delta(X)) ) = \PP_1 (X) \, .
\end{eqnarray*}
\end{proof}

%%%%%%%%%%%%%%%%%%%%%%%%%
\section{Joint measurability}\label{sec:joint}
%%%%%%%%%%%%%%%%%%%%%%%%%

In quantum measurement theory, the traditional compatibility relation between two observables is \emph{joint measurability}. If a pair of observables is jointly measurable, then their measurement outcome statistics can be recovered from the measurement statistics of a one single observable. The basic fact in quantum mechanics states that there are pairs of observables which are not jointly measurable. This opens up the room for many interesting questions.

The joint measurability relation and some related concepts are reviewed \cite{Lahti03}.  Here we just recall the definition and recommend this reference for more explanation. As in earlier sections, we always suppose that $\Omega_1 , \Omega_2$ are lcsc spaces.

\begin{definition}\label{def:joint}
Observables $\Eo_1$ and $\Eo_2$ are \emph{jointly measurable} if there exists an observable $\Go$, defined on the outcome space $\Omega_1\times\Omega_2$, such that
\begin{equation*}
\Go(X\times\Omega_2) = \Eo_1(X) \, , \qquad \Go(\Omega_1\times Y) = \Eo_2(Y)
\end{equation*}
for all $X\in\bor{\Omega_1}$ and $Y\in\bor{\Omega_2}$. In this case $\Go$ is a \emph{joint observable} of $\Eo_1$ and $\Eo_2$.
\end{definition}

Having a one moment reflection on Definition \ref{def:joint}, we notice that both $\Eo_1$ and $\Eo_2$ are special kind of fuzzy versions of $\Go$. Namely, if $\Phi_i : \Omega_1 \times \Omega_2 \frecc \Omega_i$ is the projection on the $i$-th factor, then $\Eo_i = \Go \circ \Phi_i^{-1}$.

This observation leads to a natural generalization of joint measurability. Namely, instead of requiring that the observables $\Eo_1$ and $\Eo_2$ are marginals of some observable, we can be interested whether they are fuzzy versions of one single observable. If this would be the case and $\Go$ is measured, we can calculate the measurement outcome distributions of $\Eo_1$ and $\Eo_2$. From this point of view, there is little difference whether  $\Eo_1$ and $\Eo_2$ are marginals or just fuzzy versions of $\Go$.

It thus seems that the joint measurability requirement is too restrictive and one should instead concentrate on the compatibility relation related to fuzzy versions.  However, this latter apparently more general relation is actually equivalent to joint measurability. This is the content of the following statement. The same result has been obtained in Proposition 3 of \cite{Ho85} using a different method.

\begin{proposition}\label{prop:joint-iff-fuzzy}
Two observables $\Eo_1$ and $\Eo_2$ are jointly measurable if and only if there exists an observable $\Go$ such that both $\Eo_1$ and $\Eo_2$ are fuzzy versions of $\Go$.
\end{proposition}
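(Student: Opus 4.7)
The forward direction (joint measurability implies common fuzzy-version structure) is exactly the observation made immediately before the proposition: if $\Go$ is a joint observable on $\Omega_1 \times \Omega_2$ and $\Phi_i$ denotes projection onto the $i$-th factor, then $\Eo_i = \Go \circ \Phi_i^{-1}$, which by \eqref{eq:fuzzy-function} exhibits $\Eo_i$ as the fuzzy version of $\Go$ induced by the Dirac Markov kernel $x \mapsto \delta_{\Phi_i(x)}$. So the plan is to concentrate on the converse.

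Suppose $\Go$ is an observable on some lcsc space $\Omega$ and that there are Markov kernels $\mu_i : \Omega \frecc P(\Omega_i)$ with
\begin{equation*}
\Eo_i (X) = \int_{\Omega} [\mu_i (x)](X) \de \Go(x), \qquad X \in \bor{\Omega_i},\; i = 1,2.
\end{equation*}
The idea is to manufacture a joint observable $\Ho$ of $\Eo_1$ and $\Eo_2$ by conditioning independently through $\mu_1$ and $\mu_2$ at each point $x$. Concretely, define a map $\mu : \Omega \frecc P(\Omega_1 \times \Omega_2)$ by the product measure
\begin{equation*}
\mu(x) = \mu_1(x) \otimes \mu_2(x),
\end{equation*}
and then set
\begin{equation*}
\Ho(Z) = \int_{\Omega} [\mu(x)](Z) \de \Go(x), \qquad Z \in \bor{\Omega_1 \times \Omega_2}.
\end{equation*}

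The main technical point is to verify that $\mu$ is itself a Markov kernel, i.e.\ that $x \mapsto [\mu(x)](Z)$ is measurable for every $Z \in \bor{\Omega_1 \times \Omega_2}$. For measurable rectangles $Z = X \times Y$ this is immediate, since $[\mu(x)](X \times Y) = [\mu_1(x)](X)\cdot [\mu_2(x)](Y)$ is the product of two measurable functions by Proposition \ref{misMark}. A standard monotone class / Dynkin system argument then extends measurability to the whole product Borel $\sigma$-algebra $\bor{\Omega_1} \otimes \bor{\Omega_2} = \bor{\Omega_1 \times \Omega_2}$ (the last equality uses second countability). Once this is done, $\Ho$ is well defined by the formula above and is easily seen to be a POVM on $\Omega_1\times\Omega_2$ because $[\mu(x)](\Omega_1\times\Omega_2) = 1$.

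To finish, I check that the marginals of $\Ho$ coincide with $\Eo_1$ and $\Eo_2$: for $X \in \bor{\Omega_1}$,
\begin{equation*}
\Ho(X \times \Omega_2) = \int_{\Omega} [\mu_1(x)](X)\,[\mu_2(x)](\Omega_2)\de \Go(x) = \int_{\Omega} [\mu_1(x)](X)\de \Go(x) = \Eo_1(X),
\end{equation*}
and symmetrically $\Ho(\Omega_1 \times Y) = \Eo_2(Y)$. Hence $\Ho$ is a joint observable for $\Eo_1$ and $\Eo_2$, completing the proof. I expect the only genuinely delicate step to be the measurability of the product kernel $\mu$, but this is entirely routine once the rectangle case is noted and the monotone class theorem is invoked; everything else is a direct bookkeeping of marginals.
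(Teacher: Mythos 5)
Your proof is correct and takes essentially the same route as the paper's: the converse is handled by forming the product kernel $x \mapsto \mu_1(x)\otimes\mu_2(x)$, integrating it against $\Go$ to get a joint observable, and reading off the marginals, exactly as in Lemma \ref{lemma:markov} and the paper's proof of Proposition \ref{prop:joint-iff-fuzzy}. The only cosmetic difference is that you verify measurability of the product kernel via rectangles and a monotone class argument, whereas the paper does it through $C_0$ functions and Proposition \ref{misMark}; both are routine.
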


As we have discussed above, the 'only if' part of Proposition \ref{prop:joint-iff-fuzzy} is trivial since a marginal observable of $\Go$ is its particular type of fuzzy version. Before giving the proof of Proposition \ref{prop:joint-iff-fuzzy} in the general situation, we demonstrate the equivalence of these two conditions in the case of finite outcome sets.

Let $\Eo_1$ and $\Eo_2$ be two observables with finite outcome sets $\Omega_1=\{a_1,\ldots,a_m\}$ and $\Omega_2=\{b_1,\ldots,b_n\}$, respectively. To simplify the notation, we denote $\Eo_1(\{a_i\})\equiv \Eo_1(a_i)$, and so on. Suppose that $\Eo_1$ and $\Eo_2$ are both fuzzy versions of an observable $\Go$, which is defined on a finite outcome set $\Omega$. This means that we have Markov kernels $\mu$ and $\nu$ such that
\begin{equation*}
\Eo_1(a_i)=\sum_{x\in\Omega} \mu_{x}(a_i) \ \Go(x) \, , \qquad \Eo_2(b_j)=\sum_{x\in\Omega} \nu_{x}(b_j) \ \Go(x) \, .
\end{equation*}

For every $a_i\in\Omega_1,b_j\in\Omega_2,x\in\Omega$, denote
\begin{equation*}
\lambda_{x}(a_i,b_j)=\mu_{x}(a_i) \nu_{x}(b_j) \, .
\end{equation*}
It is clear that $0 \leq \lambda_{x}(a_i,b_j)\leq 1$ and
\begin{equation*}
\sum_{a_i,b_j} \lambda_{x}(a_i,b_j) = 1 \quad \forall x\in\Omega \, .
\end{equation*}
Thus, $\lambda$ is a Markov kernel.

We now define an observable $\widetilde{\Go}$ with the outcome set $\Omega_1\times\Omega_2$ as
\begin{equation*}
\widetilde{\Go}(a_i,b_j)=\sum_{x\in\Omega} \lambda_{x}(a_i,b_j) \ \Go(x) \, .
\end{equation*}
By its definition, $\widetilde{\Go}$ is a fuzzy version of $\Go$. Moreover, a simple calculation gives
\begin{equation*}
\sum_{a_i\in\Omega_1} \widetilde{\Go}(a_i,b_j) = \sum_{x\in\Omega} \sum_{a_i\in\Omega_1} \lambda_{x}(a_i,b_j) \Go(x) = \sum_{x\in\Omega} \nu_{x}(b_j) \Go(x)=\Eo_2(b_j)
\end{equation*}
and similarly
\begin{equation*}
\sum_{b_j\in\Omega_2} \widetilde{\Go}(a_i,b_j) =\Eo_1(a_i) \, .
\end{equation*}
We conclude that $\widetilde{\Go}$ is a joint observable for $\Eo_1$ and $\Eo_2$.

Before proving Proposition \ref{prop:joint-iff-fuzzy}, we need a small technical lemma.

\begin{lemma}\label{lemma:markov}
Let $\Sigma , \Omega_1 , \Omega_2$ be lcsc spaces. Suppose $\mu : \Sigma \frecc P(\Omega_1)$, $\nu : \Sigma \frecc P(\Omega_2)$ are Markov kernels. Then the mapping
\begin{equation*}
\lambda: \Sigma \frecc P(\Omega_1 \times \Omega_2) , \quad \lambda (h) = \mu (h) \otimes \nu (h)
\end{equation*}
is a Markov kernel.
\end{lemma}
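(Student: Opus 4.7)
The plan is to invoke Proposition \ref{misMark}, which reduces the Markov kernel property for $\lambda$ to verifying that $h\mapsto [\lambda(h)](X)$ is Borel measurable for every $X\in\bor{\Omega_1\times\Omega_2}$. First, for each fixed $h\in\Sigma$ the measure $\lambda(h)=\mu(h)\otimes\nu(h)$ is well defined by the standard product measure construction, and
\begin{equation*}
[\lambda(h)](\Omega_1\times\Omega_2)=[\mu(h)](\Omega_1)\,[\nu(h)](\Omega_2)=1,
\end{equation*}
so $\lambda(h)\in P(\Omega_1\times\Omega_2)$.

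Next, since $\Omega_1$ and $\Omega_2$ are second countable, one has $\bor{\Omega_1\times\Omega_2}=\bor{\Omega_1}\otimes\bor{\Omega_2}$, i.e.\ the Borel $\sigma$-algebra of the product is generated by the $\pi$-system $\mathcal{R}$ of measurable rectangles $A\times B$ with $A\in\bor{\Omega_1}$, $B\in\bor{\Omega_2}$. On a rectangle,
\begin{equation*}
[\lambda(h)](A\times B)=[\mu(h)](A)\cdot [\nu(h)](B),
\end{equation*}
and each factor is measurable in $h$ by the Markov kernel hypothesis on $\mu$ and $\nu$ (Proposition \ref{misMark} guarantees measurability of $h\mapsto[\mu(h)](A)$ and $h\mapsto[\nu(h)](B)$). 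Since a product of two Borel measurable functions is Borel measurable, $h\mapsto[\lambda(h)](A\times B)$ is measurable for every $A\times B\in\mathcal{R}$.

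The step I would then carry out is a standard Dynkin $\pi$-$\lambda$ argument. Let
\begin{equation*}
\mathcal{D}=\{X\in\bor{\Omega_1\times\Omega_2}\mid h\mapsto [\lambda(h)](X)\text{ is Borel measurable}\}.
\end{equation*}
One checks that $\mathcal{D}$ is a Dynkin class: it contains $\Omega_1\times\Omega_2$; if $Y\subset X$ belong to $\mathcal{D}$ then $[\lambda(h)](X\setminus Y)=[\lambda(h)](X)-[\lambda(h)](Y)$ is measurable; and for a disjoint sequence $\{X_n\}_{n\in\N}\subset\mathcal{D}$, $\sigma$-additivity of $\lambda(h)$ gives $[\lambda(h)](\cup_n X_n)=\sum_n[\lambda(h)](X_n)$, which is a pointwise limit of measurable functions and thus measurable. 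Since $\mathcal{D}$ contains the $\pi$-system $\mathcal{R}$, the $\pi$-$\lambda$ theorem yields $\mathcal{D}=\bor{\Omega_1\times\Omega_2}$. Applying Proposition \ref{misMark} once more, $\lambda$ is a Markov kernel.

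There is no real obstacle here; the only non-mechanical ingredient is the identification $\bor{\Omega_1\times\Omega_2}=\bor{\Omega_1}\otimes\bor{\Omega_2}$, which relies on second countability and ensures that rectangles generate the right $\sigma$-algebra. Everything else is the standard passage from rectangles to all Borel sets via a monotone class argument.
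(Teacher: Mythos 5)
Your proof is correct, but it takes a genuinely different route from the paper's. The paper works entirely on the function side: it first checks measurability of $h\mapsto[\lambda(h)](f_1\otimes f_2)$ for $f_1\in C_0(\Omega_1)$, $f_2\in C_0(\Omega_2)$, then uses the density of $C_0(\Omega_1)\otimes C_0(\Omega_2)$ in $C_0(\Omega_1\times\Omega_2)$ (Stone--Weierstrass) and uniform convergence to get weak*-measurability of $\lambda$, and finally invokes Proposition \ref{misMark} to pass from weak*-measurability to the Markov kernel property. You instead work on the set side: measurability of $h\mapsto[\lambda(h)](A\times B)$ on rectangles, then a $\pi$-$\lambda$ (Dynkin) argument to reach all of $\bor{\Omega_1\times\Omega_2}$, using the identification $\bor{\Omega_1\times\Omega_2}=\bor{\Omega_1}\otimes\bor{\Omega_2}$ valid by second countability. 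Both arguments are about equally short; yours is arguably closer to the paper's stated definition of a Markov kernel (measurability of $h\mapsto[\lambda(h)](X)$ for Borel $X$), so your final appeal to Proposition \ref{misMark} is not even needed, whereas the paper's version sits more naturally inside its $C_0$-based formalism for POVMs. The one ingredient you rely on that the paper's route avoids is the equality of the product Borel $\sigma$-algebra with the Borel $\sigma$-algebra of the product, which you correctly flag and which does hold here; conversely, the paper's route relies on Stone--Weierstrass density, which you avoid. Minor points: the measurability of $h\mapsto[\mu(h)](A)$ is immediate from the definition of Markov kernel rather than needing Proposition \ref{misMark}, and it is worth a one-line remark that on lcsc spaces all finite Borel measures are regular, so $\mu(h)\otimes\nu(h)$ is indeed an element of $P(\Omega_1\times\Omega_2)$ in the sense the paper uses (identifying measures with functionals on $C_0$). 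Neither affects the validity of your argument.
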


\begin{proof}
If $f_1 \in C_0 (\Omega_1)$, $f_2 \in C_0 (\Omega_2)$, then the mapping
\begin{equation*}
h \mapsto [\lambda (h)] (f_1 \otimes f_2) = [\mu (h)] (f_1) [\nu (h)] (f_2)
\end{equation*}
is measurable by Proposition \ref{misMark}. If $g \in C_0 (\Omega_1\times\Omega_2)$, then there is a sequence $\{ g_n \}_{n\in\N} \subset C_0 (\Omega_1) \otimes C_0 (\Omega_2)$ such that $g_n \to g$ in the uniform norm. It follows that $[\lambda (h)] (g_n) \to [\lambda (h)] (g)$ for all $h$, and thus the mapping $h \mapsto [\lambda (h)] (g)$ is measurable. Another application of Proposition \ref{misMark} then gives the claim.
\end{proof}
\begin{proof}[Proof of Proposition \ref{prop:joint-iff-fuzzy}.]
The proof in the general case follows the same lines as the finite outcome example above.

Suppose that $\Omega_1,\Omega_2$ are the outcome spaces for $\Eo_1 , \Eo_2$, respectively. We assume that both $\Eo_1$ and $\Eo_2$ are fuzzy versions of the observable $\Go$, which is defined on the outcome space $\Omega$. We need to show that there exists a joint observable $\widetilde{\Go}$ of $\Eo_1$ and $\Eo_2$.

For every $X\in\bor{\Omega_1},Y\in\bor{\Omega_2},x\in\Omega$, denote
\begin{equation*}
\lambda (x) =  \mu (x) \otimes \nu (x) \, .
\end{equation*}
Then the mapping $x \mapsto \lambda (x)$ is a Markov kernel by Lemma \ref{lemma:markov}.

We define an observable $\widetilde{\Go}$ by formula
\begin{equation*}
\widetilde{\Go}(Z)=\int [\lambda (x)] (Z) \de \Go(x) \quad Z\in\bor{\Omega_1 \times \Omega_2} .
\end{equation*}
It follows directly from our construction that
\begin{equation*}
\widetilde{\Go}(X\times\Omega_2) = \Eo_1(X) \, , \qquad \widetilde{\Go}(\Omega_1\times Y) = \Eo_2(Y)
\end{equation*}
for all $X\in\bor{\Omega_1}$ and $Y\in\bor{\Omega_2}$. Hence, $\widetilde{G}$ is a joint observable of $\Eo_1$ and $\Eo_2$.
\end{proof}

In Section \ref{sec:fuzzy} we discussed about observables which are not fuzzy (in the absolute sense). This property has an equivalent formulation in terms of joint measurements, which is expressed in the next proposition.

\begin{proposition}
Let $\Eo$ be an observable. The following conditions are equivalent:
\begin{itemize}
\item[(a)] $\Eo$ is not a fuzzy observable.
\item[(b)] If $\Fo$ is another observable which is jointly measurable with $\Eo$, then $\Fo$ is a fuzzy version of $\Eo$.
\end{itemize}
\end{proposition}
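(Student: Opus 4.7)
The plan is to prove the two implications separately, using two natural constructions that mediate between fuzzification and joint measurability.

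For (b)$\Rightarrow$(a), I would assume (b) and pick any observable $\Fo'$ of which $\Eo$ is a fuzzy version, with Markov kernel $\mu : \Omega_{\Fo'} \to P(\Omega_\Eo)$. The key step is to produce a joint observable of $\Eo$ and $\Fo'$, thereby exhibiting their joint measurability. Setting $\Go(X\times Y) := \int_Y [\mu(x)](X) \de \Fo'(x)$ on product rectangles and extending in the standard way---for each state $T$, the set function $q_T(X\times Y) = \int_Y [\mu(x)](X) \de p^{\Fo'}_T(x)$ extends to a Borel probability measure on $\Omega_\Eo \times \Omega_{\Fo'}$ via the classical Markov-kernel product construction, while the positive linear form $T \mapsto q_T(Z)$ recovers a POVM $\Go$---yields $\Go(X\times\Omega_{\Fo'}) = \Eo(X)$ and $\Go(\Omega_\Eo\times Y) = \Fo'(Y)$, where the second identity uses $[\mu(x)](\Omega_\Eo) = 1$. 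Hypothesis (b) then forces $\Fo'$ to be a fuzzy version of $\Eo$, which by definition means exactly that $\Eo$ is not a fuzzy observable.

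For (a)$\Rightarrow$(b), I would take $\Fo$ jointly measurable with $\Eo$ via a joint observable $\Go$ on $\Omega_\Eo\times \Omega_\Fo$. Both $\Eo$ and $\Fo$ appear as marginals of $\Go$, hence as fuzzy versions of $\Go$ through the Dirac Markov kernels $z\mapsto \delta_{\pi_i(z)}$ associated with the coordinate projections $\pi_i$. Since $\Eo$ is a fuzzy version of $\Go$, hypothesis (a) forces $\Go$ to be, reciprocally, a fuzzy version of $\Eo$, so there exists a Markov kernel $\nu : \Omega_\Eo \to P(\Omega_\Eo\times \Omega_\Fo)$ with $\Go(Z) = \int [\nu(x)](Z) \de \Eo(x)$. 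Pushing $\nu$ forward by the second projection gives a Markov kernel $\lambda(x) := \nu(x)\circ \pi_2^{-1}$ from $\Omega_\Eo$ to $P(\Omega_\Fo)$, and specializing the equation for $\Go$ to cylinders $\Omega_\Eo\times Y$ produces $\Fo(Y) = \int [\lambda(x)](Y) \de \Eo(x)$, exhibiting $\Fo$ as a fuzzy version of $\Eo$.

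The main obstacle lies in (b)$\Rightarrow$(a): extending the rectangle formula for $\Go$ to a genuine POVM on the full product Borel $\sigma$-algebra. This requires running the standard Markov-kernel product measure construction state by state, then reconstructing operator-valuedness from the Riesz-type identification of normal positive functionals on $\sh$ with elements of $\lh$; the necessary measurability is handled by Proposition \ref{misMark}. The converse direction (a)$\Rightarrow$(b) is, by contrast, a one-line composition of Markov kernels, made particularly easy by the fact that the kernel identifying $\Fo$ as a marginal of $\Go$ is of Dirac type.
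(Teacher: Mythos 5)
Your proof is correct, and its logical skeleton is the same as the paper's (joint measurability of a pair is equivalent to their having a common ``fuzzy ancestor'', plus transitivity of fuzzification), but you re-derive from scratch two ingredients that the paper simply quotes, so the routes differ in how much machinery is rebuilt. For (b)$\Rightarrow$(a) --- the paper proves the contrapositive $\neg$(a)$\Rightarrow\neg$(b) with a single witness $\Fo$, whereas you quantify over all $\Fo'$; both are fine --- the joint observable you construct by hand is exactly the one delivered by Proposition \ref{prop:joint-iff-fuzzy}: since $\Fo'$ is trivially a fuzzy version of itself (identity kernel) and $\Eo$ is a fuzzy version of $\Fo'$ by hypothesis, that proposition gives joint measurability at once. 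Your rectangle formula is the integral of the product kernel $x\mapsto\mu(x)\otimes\delta_x$ against $\Fo'$, whose measurability is Lemma \ref{lemma:markov}, so the ``main obstacle'' you flag (extending from rectangles to all of $\bor{\Omega_\Eo\times\Omega_{\Fo'}}$) is already dispatched in the paper; if you do want to run it independently, note that each $\mu(x)\otimes\delta_x$ is already a Borel probability measure on the product, which shortcuts the state-by-state extension you describe. For (a)$\Rightarrow$(b), the paper invokes transitivity of ``is a fuzzy version of'' (citing \cite{Heinonen05}) to pass from $\Fo\preceq\Go\preceq\Eo$ to $\Fo\preceq\Eo$; your pushforward $\lambda(x)=\nu(x)\circ\pi_2^{-1}$ is precisely the composition of the Dirac kernel $z\mapsto\delta_{\pi_2(z)}$ with $\nu$, i.e.\ an explicit proof of the needed instance of transitivity (with measurability of $x\mapsto[\nu(x)](\pi_2^{-1}(Y))$ supplied by Proposition \ref{misMark}). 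What your version buys is self-containedness --- no appeal to the external transitivity reference and no reliance on Proposition \ref{prop:joint-iff-fuzzy}; what the paper's version buys is a three-line proof.
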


\begin{proof}
Suppose that (a) holds. Let $\Fo$ be an observable which is jointly measurable with $\Eo$, and let $\Go$ be their joint observable. This means, in particular, that $\Eo$ is a fuzzy version of $\Go$. By the assumption, this can happen only if $\Go$ is a fuzzy version of $\Eo$. The relation of "being fuzzy version" is transitive (see e.g. \cite{Heinonen05}). Therefore, we conclude that $\Fo$ is a fuzzy version of $\Eo$. Thus, (b) holds.

Suppose then that (a) does not hold, i.e., $\Eo$ is a fuzzy observable. Then there is an observable $\Fo$ such that $\Eo$ is a fuzzy version of $\Fo$ but $\Fo$ is not a fuzzy version of $\Eo$. Since $\Fo$ is trivially fuzzy version of itself, we conclude from Proposition \ref{prop:joint-iff-fuzzy} that $\Eo$ and $\Fo$ are jointly measurable. Hence, (b) does not hold.
\end{proof}

\end{document}